\newtheorem{theorem}{Theorem}[section]
 \newtheorem{definition}[theorem]{Definition}
 \newtheorem{proposition}[theorem]{Proposition}
 \newtheorem{corollary}[theorem]{Corollary}
 \newtheorem{example}{Example}[section]
 \newtheorem{remark}{Remark}[section]
\def\thefootnote{\fnsymbol{footnote}}
\newlength{\minitwocolumn}
\long\def\@makefntext#1{
\protect\noindent \hbox to 3.2pt {\hskip-.9pt
$^{{\eightrm\@thefnmark}}$\hfil}#1\hfill}               
\def\thefootnote{\fnsymbol{footnote}}
\def\@makefnmark{\hbox to 0pt{$^{\@thefnmark}$\hss}}    
\def\ps@myheadings{\let\@mkboth\@gobbletwo
\def\@oddhead{\hbox{}
\rightmark\hfil\eightrm\thepage}
\def\@oddfoot{}\def\@evenhead{\eightrm\thepage\hfil
\leftmark\hbox{}}\def\@evenfoot{}
\def\sectionmark##1{}\def\subsectionmark##1{}}
\font\eightrm=cmr8
\newcommand{\qed}{\nobreak \ifvmode \relax \else
      \ifdim\lastskip<1.5em \hskip-\lastskip
      \hskip1.5em plus0em minus0.5em \fi \nobreak
      \vrule height0.75em width0.5em depth0.25em\fi}
\newcommand{\sbv}[2]{{\{{{#1},{#2}}\}}}
\newcommand{\bsbv}[2]{{\{{{#1},{#2}}\}}}
\newcommand{\ssbv}[2]{{[{{#1},{#2}}]}}
\newcommand{\dsbv}[2]{{[{{#1},{#2}}]_{D}}}
\newcommand{\bracket}[2]{\langle #1\,,#2\rangle}
\newcommand{\floor}[1]{{\lfloor #1 \rfloor}}
\def\bx{\mbox{\boldmath $x$}}
\def\bxi{\mbox{\boldmath $\xi$}}
\def\bq{\mbox{\boldmath $q$}}
\def\bp{\mbox{\boldmath $p$}}
\def\bbx{\mbox{\boldmath $x$}}
\def\bbxi{\mbox{\boldmath $\xi$}}
\def\bomega{\mbox{\boldmath $\omega$}}
\def\tiS{{S_{\calM}}}
\def\balpha{\mbox{\boldmath $\alpha$}}
\newcommand{\bZ}{\mathbb{Z}}
\newcommand{\bR}{\mathbb{R}}
\def\bPhi{\mbox{\boldmath $\Phi$}}
\newcommand{\calA}{{\cal A}}
\newcommand{\calB}{{\cal B}}
\newcommand{\calC}{{\cal C}}
\newcommand{\calI}{{\cal I}}
\newcommand{\calJ}{{\cal J}}
\newcommand{\calL}{{\cal L}}
\newcommand{\calM}{{\cal M}}
\newcommand{\calN}{{\cal N}}
\newcommand{\calO}{{\cal O}}
\newcommand{\calX}{{\cal X}}
\newcommand{\Q}{{\kern.24em\vrule width.04em height1.4ex%
                 depth-.05ex\kern-.26em\mathsf Q}}
\newcommand{\C}{{\kern.24em\vrule width.04em height1.4ex%
                 depth-.05ex\kern-.26em\mathsf C}}
\newcommand{\Map}{{\rm Map}}
\newcommand{\ev}{{\rm ev}}
\newcommand{\nom}{\nonumber}
\newcommand{\beq}{\begin{equation}}
\newcommand{\eeq}{\end{equation}}
\newcommand{\bea}{\begin{eqnarray*}}
\newcommand{\eea}{\end{eqnarray*}}
\newcommand{\beqa}{\begin{eqnarray}}
\newcommand{\eeqa}{\end{eqnarray}}
\newcommand{\bJ}{\boldsymbol{J}}
\newcommand{\veta}{\boldsymbol{\eta}}
\newcommand{\vxi}{\boldsymbol{\xi}}
\newcommand{\vchi}{{\boldsymbol{\chi}}}
\newcommand{\del}{\partial}
\newcommand{\courant}[2]{{[{{#1},{#2}}]_D}}
\newcommand\gh{{\rm gh}}
\newcommand{\cJ}{{J_{cl}}}
\newcommand{\cepsilon}{{\epsilon_{cl}}}
\newcommand{\proj}{{pr}}
\newcommand{\omegab}{{\omega_{T^*[n+1]\calM}}}
\newcommand{\omegas}{{\omega_{\calM}}}
\def\bomegab{{\boldsymbol{\omega}_{T^*[n+1]\calM}}}
\def\bomegas{{\boldsymbol{\omega}_{\calM}}}
\newcommand{\bvartheta}{{\boldsymbol{\vartheta}}}
\newcommand{\varthetas}{{\vartheta_{\calM}}}
\newcommand{\bvarthetas}{{\boldsymbol{\vartheta}_{\calM}}}
\def\bTheta{\mbox{$S$}}
\newcommand{\rd}{{\mathrm{d}}}
\newcommand{\Cn}{{C_{\leq n}}}
\newcommand{\Bn}{{B_{\leq n}}}
\def\bbd{{D}}
\newcommand{\CAn}{{\calC\calA_{\leq n}}}
\newcommand{\CAone}{{\calC\calA_{\leq 1}}}
\begin{document}


\baselineskip 0.7cm

\begin{titlepage}
\begin{flushright}
\end{flushright}

\vskip 1.35cm
\begin{center}
{\Large \bf
Current algebras from QP-manifolds in general dimensions
}
\vskip 1.2cm
Noriaki Ikeda$^{1
}$%
\footnote{E-mail:\
nikeda@se.ritsumei.ac.jp
}
and
Xiaomeng Xu${}^2$%
\footnote{E-mail:\
xxu@bicmr.pku.edu.cn
}
\vskip 0.4cm
{\it
$^1$
Department of Mathematical Sciences, Ritsumeikan University \\
Kusatsu 525-8577, Japan
\\
$^2$
Scool of Mathematical Sciences and Beijing International Center for Mathematical Research, Peking University, Beijing 100871, China 
}
\vskip 0.4cm

\today

\vskip 1.5cm

\begin{abstract}
We propose a new unified formulation of the current algebra theory in general dimensions in terms of supergeometry. We take a QP-manifold, i.e. a differential graded (dg) symplectic manifold, as a fundamental framework. A Poisson bracket in a current algebra is constructed by the so called derived bracket of the graded Poisson structure induced from the above QP-structure. By taking a canonical transformation on a QP-manifold, correct anomalous terms in physical theories are derived. A large class of current algebras with and without anomalous terms (central extensions) are constructed from the above structure. Moreover, using this formulation, a new class of current algebras related higher structures are systematically obtained.
\end{abstract}
\end{center}
\end{titlepage}

\renewcommand{\thefootnote}{\alph{footnote}}

\setcounter{page}{2}


\rm

\section{Introduction}
\noindent
A current algebra is a Poisson algebra for conserved currents 
in a quantum field theory \cite{Treiman}. 
For instance, 
the commutation relation of currents $J$'s for 
a nonabelian Lie group are
\begin{eqnarray}
[J_a(\sigma),J_b(\sigma^{\prime})]
= {f_{ab}}^c J_c(\sigma) \delta^n(\sigma-\sigma^{\prime}),
\end{eqnarray}
where ${f_{ab}}^c$ is the structure constant of the corresponding Lie algebra
and $\sigma$ and $\sigma^{\prime}$ are local coordinates on a space.
In general, 
extra terms called anomalous terms
appear in the commutators:
\begin{eqnarray}
[J_a(\sigma),J_b(\sigma^{\prime})]
= {f_{ab}}^c J_c(\sigma) \delta(\sigma-\sigma^{\prime})
+ \mbox{ anomalous terms},
\end{eqnarray}
which contain derivatives of the delta function
$\delta(\sigma-\sigma^{\prime})$.
They appear in many important physical models such as
a chiral current in the gauge theories,
the Kac-Moody algebra in the theories in $(1+1)$ dimensions, etc.
These breaking terms of Poisson brackets are important 
to analyze a consistent quantum field theory
\cite{Faddeev:1984jp, Faddeev:1985iz}.

Recently, current algebras with higher structures have been constructed.
This has been firstly observed in \cite{Alekseev:2004np}, 
in which a Courant algebroid structures 
appear in a current algebra of a $2$-dimensional sigma model.
Current algebras with higher algebroid structures have also been 
formulated, which include current algebras of $p$-brane sigma models 
\cite{Bonelli:2005ti, Guttenberg, Ikeda:2011ax}. 
However, we do not have a general geometrical framework
to describe anomalous terms.
It is important to construct a general formalism in order 
to understand algebraic and geometric structure of a generalized current algebra theory.
This article proposes a new geometric formulation based on supergeometry 
to understand higher algebroid current algebras and their anomaly terms.

Supergeometry appears
in many subjects in mathematics and physics.
It is closely connected to the physical formalisms
such as supersymmetry \cite{Manin}, 
and the BRST-BV-BFV formalism \cite{Batalin:jr, Batalin:1983pz, Henneaux:1992ig}.
A recent interesting application of supergeometry to physical theories is 
the Alexandrov-Kontsevich-Schwarz-Zaboronsky (AKSZ) formalism of topological sigma models
\cite{Alexandrov:1995kv, Cattaneo:2001ys, Roytenberg:2006qz, Ikeda:2012pv}.
Moreover structures from various algebroids 
such as Poisson structures, Lie algebroids, Lie $n$-algebroids
and Dirac structures are naturally formulated by supergeometry
\cite{Roy01, Severa:2001}.
Our theory is directly connected to above structures.

Recently the authors have introduced 
notion of 
a tower of (twisted)
differential graded symplectic structures satisfying compatibility conditions 
on two graded manifolds $\calM$ and $T^*[n+1]\calM$ \cite{Ikeda:2013wh}.
They unify geometric structures nontwisted and twisted differential graded symplectic manifolds (which can break the classical master equation).
Moreover they have provided mathematical structure of 
AKSZ sigma models with boundary.
In this paper, we show similar structures are also 
proper framework for the current algebra theory.

In this paper, we construct a general supergeometric formalism of 
current algebra theory. 
The anomalous current algebras are derived from 
twist of an underlying graded symplectic structure.
We work with two mapping spaces
$\Map(T[1]\Sigma_n,\calM)$ and
$\Map(T[1]\Sigma_n,T^*[n+1]\calM)$,
where $\Sigma_n$ is a compact oriented manifold of dimension $n$
and $\calM$ is a graded symplectic manifold of degree $n$.
$\calM$ corresponds to an extended target space of the AKSZ formalism 
including physical fields, ghosts and antifields.
$(T^*[n+1]\calM, \calM)$ are a graded manifold and its cotangent bundle pair
with some consistent structure.
In this setting, currents on
$\Map(T[1]\Sigma_n,\calM)$
with anomalous terms are derived from a Poisson algebra of functions
on the extended mapping space $\Map(T[1]\Sigma_n,T^*[n+1]\calM)$.
We introduce a canonical transformation called twisted pullback,
and a pullback derived bracket.
The formalism in this paper may be called as
''the AKSZ-BFV formalism of current algebra theories''.

A large class of known current algebras are formulated as 
our supergeometric current algebra theory.
They contain not only the current algebras 
of Lie algebras and Kac-Moody algebras,
but also current algebras in AKSZ sigma models.
Moreover we find new current algebras with structure of higher algebroids
including all the generalized current algebras analyzed in the recent papers
\cite{Alekseev:2004np, Bonelli:2005ti, Guttenberg, Ikeda:2011ax}.

Recently, similar but slightly different construction of current algebras based on Q-manifolds was given in \cite{Arvanitakis:2021wkt}.

The paper is organized as follows.
In section 2, a QP-manifold, twisting
are prepared as preliminary.
In section 3, we analyze properties of the derived brackets between functions
and target space structures.
In section 4, we propose a supergeometric formalism of 
current algebras and discuss its relation with classical current algebras. 
We also give many examples which illuminate our theory.
In section 5, current algebras in the AKSZ sigma models
are reconstructed as special cases of our theory.
Section 6 is devoted to future outlook.
In the appendix, we consider an example in a target space.

\section{Preliminary: QP-manifolds and twist
}
\noindent
We explain mathematical tools in our articles,
a QP-manifold, twist and a Lagrangian submanifold.
For details on the background and definitions, we refer to \cite{Severa:2001, Voronov:2001qf, Roytenberg:2006qz, Cattaneo:2010re, Ikeda:2012pv, Cueca:2019rmv}, etc.

First, we consider our theory on a graded manifold, a manifold with $\bZ_{\geq 0}$-grading. The precise definition is as follows.
\begin{definition}
A graded manifold $\calM$ on a smooth manifold $M$
is defined as a ringed space with a structure sheaf $\calO$
of a nonnegatively graded commutative
algebra over an ordinary smooth manifold $M$.
$\calM$ is locally isomorphic to
$C^{\infty}(U) \otimes S^{\bullet}(V)$,
where
$U$ is a local chart on $M$,
$V$ is a graded vector space and $S^{\bullet}(V)$ is a free
$\bZ$-graded commutative ring on $V$.
\end{definition}
Grading is called \textsl{degree} and denoted by $|x|$ for an element $x$.
Grading is compatible with supermanifold grading,
that is, a variable of even degree is commutative and
one of odd degree is anticommutative, i.e.,  for elements with definite degree
 $x$ and $y$, the product satisfies $xy = (-1)^{|x||y|} yx$.

%
%
The structure sheaf $\calO$ is called a space of functions on $\calM$ and 
denoted by $C^{\infty}(\calM)$.
In this paper,  a graded manifold is always of nonnegative degree.
Such a graded manifold is called an {\it N-manifold}.

\begin{example}\label{ex01}
Let $M$ be a smooth manifold and take a graded manifold $\calM = T[1]M$.
Take local coordinates $(x^i, q^i)$ of degree $(0, 1)$ on $T[1]M$,
where $x^i$ is a local coordinate on $M$ and $q^i$ is an odd 
local coordinate on the fiber $T[1]$.
A function $f(x, q)$ on $T[1]M$ is locally 
\begin{eqnarray}
f(x, q) &=& \sum_{k=0}^d \frac{1}{k!} 
f^{(k)}_{i_1, \ldots, i_k}(x) q^{i_1} \ldots q^{i_k}.
\end{eqnarray}
$f(x, q)$ is equivalent to the differential form,
\begin{eqnarray}
f(x, q) &=& \sum_{k=0}^d \frac{1}{k!} 
f^{(k)}_{i_1, \ldots, i_k}(x) \rd x^{i_1} \wedge \ldots \wedge \rd x^{i_k}.
\end{eqnarray}
Thus, the space of function on $T[1]M$ is equivalent to the space of differential forms, $C^{\infty}(T[1]M) \simeq \Omega^{\bullet}(M)$.
\end{example}

\begin{example}\label{ex02}
Let $M$ be a smooth manifold and take a graded manifold $\calM = T^*[1]M$.
Take local coordinates $(x^i, p_i)$ of degree $(0, 1)$ on $T^*[1]M$,
where $x^i$ is a local coordinate on $M$ and $p_i$ is an odd 
local coordinate on $T^*[1]$.
A function $f(x, p)$ on $T^*[1]M$ is locally 
\begin{eqnarray}
f(x, q) &=& \sum_{k=0}^d \frac{1}{k!} 
f^{(k) i_1, \ldots, i_k}(x) p_{i_1} \ldots p_{i_k}.
\end{eqnarray}
$f(x, p)$ is equivalent to the multivector field,
\begin{eqnarray}
f(x, p) &=& \sum_{k=0}^d \frac{1}{k!} 
f^{(k) i_1, \ldots, i_k}(x) \frac{\partial}{\partial x^{i_1}} \wedge \ldots \wedge  \frac{\partial}{\partial x^{i_k}}.
\end{eqnarray}
Thus, the space of function on $T^*[1]M$ is equivalent to the space of multivector fields, $C^{\infty}(T^*[1]M) \simeq \mathfrak{X}^{\bullet}(M)$.
\end{example}

\begin{example}\label{ex03}
A graded manifold is not necessarily equivalent to 
a space of sections on a vector bundle.
For a rank $d$ vector bundle $E$ over $M$, consider 
the graded cotangent bundle $\calM=T^*[2]E[1]$.
Take local coordinates $(x^i, q^a)$ of degree $(0, 1)$ on $E[1]$,
where $x^i$ is a local coordinate on $M$ and $q^a$ is a
local coordinate of degree one on the fiber.
Let $(\xi_i, p_a)$ be conjugate coordinates of degree $(2, 1)$ on $T^*[2]$.
In general, degree one coordinates $q^a$ and $p_a$, $(a = 1, \cdots, d)$, are combined to 
$\eta^a$, $(a = 1, \cdots, 2d)$,
by introducing a fiber metric $k = \bracket{-}{-}$, 
where $p_a = k_{ab} q^a$.
Using local coordinates $(x^i, \eta^a, \xi_i)$, 
local coordinate transformations of $\calM=T^*[2]E[1]$
\begin{eqnarray}
x^{\prime i} &=& x^{\prime i}(x),
\label{lctransf21}
\\
\eta^{\prime a} &=& M^a_b(x) \eta^{b},
\label{lctransf22}
\\
\xi^\prime_i &=& \frac{\partial x^j}{\partial x^{\prime i}} \xi_j
- \frac{1}{2} k_{ac} M^a_b(x) \frac{\partial M^b_d}{\partial x^i}(x) 
\eta^{c} \eta^{d},
\label{lctransf23}
\end{eqnarray}
where $x^{\prime i}(x)$ is a change of the local coordinate on $M$
and $M^a_b(x)$ is a transition function of the fiber of $E$.
The second term of the right hand side of Eq.~\eqref{lctransf23}
is not realized by any fiber bundle over $M$.
\end{example}

A graded vector field on a graded manifold is 
a derivation on $C^{\infty}(\calM)$
and a graded differential form on a graded manifold is 
an element on the dual space of the space of graded vector fields. 
A graded de Rham differential on a graded manifold $\calM$ 
is denoted by $\delta$. \footnote{$\rd$ is denoted by the normal 
de Rham differential on the normal manifold $M$.}

An N-manifold $\calM$ equipped with a graded symplectic structure
$\omega$ of degree $n$, denoted by $({\calM},\omega)$, 
is called a P-manifold of degree $n$.
$\omega$ is also called a P-structure.
The graded Poisson bracket on $C^\infty ({\cal M})$ is defined
from the graded symplectic structure $\omega$ on ${\cal M}$ as
$\sbv{f}{g} = (-1)^{|f|+n+1} \iota_{X_f} \iota_{X_g}\omega$, 
similar to normal Poisson brackets.
Here a Hamiltonian vector field $X_f$ is defined by the equation
$\iota_{X_f}\omega= - \delta f$ for $f\in C^{\infty}({\calM})$
The graded Poisson bracket of degree $n$
are graded skewsymmetric and satisfies the graded Leibniz 
rule and the graded Jacobi identity. Concrete formulas are
\begin{eqnarray}
\sbv{f}{g}&=&-(-1)^{(|f| + n)(|g| + n)} \sbv{g}{f},
\label{Poissonidentity1}
\\
\sbv{f}{g h}&=&\sbv{f}{g} h
+ (-1)^{(|f| + n) |g| } g \sbv{f}{h},
\label{Poissonidentity2}
\\
\sbv{fg}{h}&=&f \sbv{g}{h}
+ (-1)^{|g| (|h|  + n)} \sbv{f}{h} g,
\label{Poissonidentity3}
\\
\sbv{f}{\sbv{g}{h}}&=&\sbv{\sbv{f}{g}}{h}
+ (-1)^{(|f|+ n)(|g| + n)} \sbv{g}{\sbv{f}{h}}.
\label{Poissonidentity4}
\end{eqnarray}

A graded manifold $\calM$ is called a Q-manifold if there exists 
a vector field $Q$ of degree $+1$ satisfying $Q^2=0$.
$Q$ is called a homological vector field, or a Q-structure.
If $\omega$ and $Q$ are compatible, it is called a \textit{QP-manifold}.
\begin{definition}
A triple $(\calM, \omega, Q)$ is called a {\rm QP-manifold} of degree $n$
and its structure is called a {\rm QP structure},
if 
$L_Q \omega =0$ {\rm \cite{Schwarz:1992nx}}.\footnote{A QP-manifold is also called a \textsl{symplectic NQ-manifold, or a differential graded (dg) symplectic manifold.}}
\end{definition}
In a QP-manifold of degree nonzero, there always exists a Hamiltonian function 
$\Theta\in C^{\infty}(\calM)$
of $Q$ 
satisfying
\beq
Q=\sbv{\Theta}{-}.
\eeq
The function $\Theta$ if of degree $n+1$. 
The homological condition, $Q^2=0$, implies that
$\Theta$ is a solution of the \textsl{classical master equation},
\begin{equation}
\sbv{\Theta}{\Theta}=0.
\label{cmaseq}
\end{equation}
A function $\Theta$ satisfying (\ref{cmaseq}) is also called a homological function or a Hamiltonian function. 

\begin{example}\label{ex04}
Take a Lie algebra $\mathfrak{g}$ and consider 
the exterior algebra $\wedge^{\bullet} \mathfrak{g}$ over $\mathfrak{g}$.
$\wedge^{\bullet} (\mathfrak{g} \oplus \mathfrak{g}^*)$ is equivalent to
$C^{\infty}(T^*[1]\mathfrak{g}[1])$.
Take a local coordinate $c^a$ of degree one on $\mathfrak{g}[1]$
and a local coordinate $b_a$ of degree one $T^*[1] \simeq \mathfrak{g}^*[1]$.
Then, $\omega = \delta c^a \wedge \delta b_a$ is the canonical graded symplectic form of degree two.
The induced Poisson bracket satisfies $\sbv{c^a}{b_b}= \delta^a_b$.
Since $\omega$ is of degree two, a homological function $\Theta$ must be of degree three. 
We define $\Theta$ as
\begin{eqnarray}
\Theta &=& \frac{1}{2} C_{ab}^c c^a c^b b_c,
\end{eqnarray}
with a constant $C_{ab}^c$.
Then, the condition \eqref{cmaseq} is equivalent that 
$\mathfrak{g}$ is a Lie algebra.
In fact, Eq.~\eqref{cmaseq} is satisfied if and only if 
$C_{ab}^c$ is the structure constant of Lie algebra $\mathfrak{g}$.

Then $Q = \sbv{\Theta}{-}$ gives the Chevalley-Eilenberg differential.
\end{example}

\begin{example}\label{ex05}
Take a graded manifold $T^*[1]M$ in Example \ref{ex02}.
The canonical graded symplectic form is 
$\omega = \delta x^i \wedge \delta p_i$ of degree one.
The induced Poisson bracket is $\sbv{x^i}{p_j} = \delta^i_j$.
The most general homological function is of degree two and 
$\Theta = \frac{1}{2} \pi^{ij}(x) p_i p_j$.
where $\pi^{ij}(x) = - \pi^{ji}(x)$.
Eq.~\eqref{cmaseq} gives a condition in $\pi^{ij}$, which is equivalent that 
the bivector field 
$\pi = \frac{1}{2} \pi^{ij} \frac{\partial}{\partial x^i} \wedge  \frac{\partial}{\partial x^j}$ is a Poisson bivector field.
Then,
\begin{eqnarray}
\{f(x), g(x) \}_{PB} &=& \frac{1}{2} \pi^{ij}(x) \frac{\partial f}{\partial x^i}\frac{\partial g}{\partial x^j}
\end{eqnarray}
is a normal Poisson bracket on $M$.
\end{example}

\begin{example}\label{ex052}
Let $E$ be a vector bundle over a smooth manifold $M$.
Consider the graded cotangent bundle $T^*[2]E[1]$ discussed 
in Example \ref{ex03}. 
A QP-manifold of degree $2$ is constructed in $T^*[2]E[1]$.
$T^*[2]E[1]$ has the following canonical graded symplectic form of degree $2$ since it is a cotangent bundle,
\begin{eqnarray}
\omega = \delta x^i \wedge \delta \xi_i 
+ \frac{1}{2} \delta \eta^a \wedge \delta(k_{ab} \eta^b).
\end{eqnarray}
A homological function is of degree $3$
and the most form is 
\begin{eqnarray}
\Theta = \rho{}^i_{a} (x) \xi_{i} \eta{}^a 
+ \frac{1}{3!} H_{abc} (x) \eta{}^a \eta{}^b \eta{}^c,
\label{courantQ}
\end{eqnarray}
where $\rho{}^i_{a} (x)$ and $H_{abc} (x)$ are
local functions of $x$.

The Q-structure condition $\sbv{\Theta}{\Theta}=0$
imposes the following relations on these functions:
\begin{eqnarray}
&& k^{ab} \rho{}^i_{a} \rho{}^j_{b} = 0, \nonumber \\ 
&& \frac{\partial \rho{}^i_{b}}{\partial x^j} \rho{}^j_{c}
- \frac{\partial \rho{}^i_{c}}{\partial x^j} \rho{}^j_{b}
+ k^{ef}  \rho{}^i{}_{e}  H_{fbc} = 0, \nonumber \\
&& \left( \rho{}^i{}_{d} \frac{\partial H_{abc}}{\partial x^i}
- \rho{}^i{}_{c} \frac{\partial H_{dab}}{\partial x^i}
+ \rho{}^i{}_{b} \frac{\partial H_{cda}}{\partial x^i}
- \rho{}^i{}_{a} \frac{\partial H_{bcd}}{\partial x^i} 
\right) 
\nonumber \\
&& \qquad
+ k^{ef} (H_{eab}  H_{cdf} 
+ H_{eac} H_{dbf} 
+ H_{ead} H_{bcf})
= 0.
\label{courantrelation}
\end{eqnarray}
We can prove that 
these identities (\ref{courantrelation}) are the same as the local coordinate 
expressions of the Courant algebroid conditions
on a vector bundle $E$. We explain it in Example \ref{ex06} in the next section.
\end{example}

\begin{definition}
Let $(\calM, \omega, \Theta)$
be a QP-manifold of degree $n$.
For any function $f  \in C^{\infty}(\calM)$ and a function 
$\alpha \in C^{\infty}(\calM)$ of degree $n$,
the operation 
$e^{\mathrm{ad} (\alpha)} f = f +
\sbv{f}{\alpha}
+ \frac{1}{2} \sbv{\sbv{f}{\alpha}}{\alpha}
+ \frac{1}{3!} \sbv{\sbv{\sbv{f}{\alpha}}{\alpha}}{\alpha}
+ \cdots$ is called a 
{\rm twist} by $\alpha$.
\end{definition}
A twist by $\alpha$ satisfies
$\sbv{e^{\mathrm{ad} (\alpha)} f}{e^{\mathrm{ad} (\alpha)} g}
= e^{\mathrm{ad} (\alpha)} \sbv{f}{g}$ for every $f, g  \in C^{\infty}(\calM)$
since $\sbv{-}{-}$ satisfies the Jacobi identity.
Especially, if $\sbv{\Theta}{\Theta}=0$, 
$\sbv{e^{\mathrm{ad} (\alpha)} f}{e^{\mathrm{ad} (\alpha)} g}=0$.
Thus, the following proposition is obtained.
\begin{proposition}
$(\calM, \omega, e^{\mathrm{ad} (\alpha)} \Theta)$
is a QP-manifold if $(\calM, \omega, \Theta)$ is a QP-manifold.
\end{proposition}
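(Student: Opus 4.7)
The plan is to verify three things in order: that $e^{\delta_{\alpha}}\Theta$ has degree $n+1$, that it satisfies the classical master equation, and that it is compatible with the unchanged symplectic form $\omega$.

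First I would check the degree. Since $\omega$ has degree $n$, the bracket $\{-,-\}$ has degree $-n$, so for $\alpha$ of degree $n$ the operator $\delta_{\alpha} = \{\alpha,-\}$ preserves degree. Hence every term $\{\{\cdots\{\Theta,\alpha\},\alpha\},\cdots,\alpha\}$ in the expansion of $e^{\delta_{\alpha}}\Theta$ has the same degree as $\Theta$, namely $n+1$. So $e^{\delta_{\alpha}}\Theta$ is a legitimate candidate for a Hamiltonian of a Q-structure.

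Next I would prove the key identity $\{e^{\delta_{\alpha}}\Theta, e^{\delta_{\alpha}}\Theta\} = e^{\delta_{\alpha}}\{\Theta,\Theta\}$ suggested in the excerpt. The core observation is that $\delta_{\alpha}$ acts as a graded derivation of the Poisson bracket: for any homogeneous $f,g$,
\begin{equation}
\delta_{\alpha}\{f,g\} = \{\delta_{\alpha} f, g\} + (-1)^{(|\alpha|+n)(|f|+n)}\{f,\delta_{\alpha} g\},
\end{equation}
which is precisely the graded Jacobi identity (\ref{Poissonidentity4}) rewritten. Iterating this derivation property gives a graded Leibniz rule for $\delta_{\alpha}^k$, summing which over $k$ with the factors $1/k!$ yields $e^{\delta_{\alpha}}\{f,g\} = \{e^{\delta_{\alpha}}f, e^{\delta_{\alpha}}g\}$, exactly as for the exponential of an ordinary derivation of an algebra. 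Specializing to $f=g=\Theta$ and using $\{\Theta,\Theta\}=0$ then immediately gives $\{e^{\delta_{\alpha}}\Theta,e^{\delta_{\alpha}}\Theta\}=0$, i.e.\ the twisted Hamiltonian still solves the classical master equation, so $Q' = \{e^{\delta_{\alpha}}\Theta,-\}$ is again a homological vector field.

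Finally, compatibility with $\omega$ is automatic: since $Q' = \{e^{\delta_{\alpha}}\Theta,-\}$ is by construction the Hamiltonian vector field of the function $e^{\delta_{\alpha}}\Theta$ with respect to the same symplectic form $\omega$, and Hamiltonian vector fields always preserve their symplectic form, $\mathcal{L}_{Q'}\omega = 0$. Together with the unchanged P-structure $\omega$ of degree $n$, this shows $(\calM,\omega,e^{\delta_{\alpha}}\Theta)$ is a QP manifold of degree $n$. The only nontrivial step, and the one I would devote most care to, is establishing that $\delta_{\alpha}$ is a derivation of $\{-,-\}$ with the correct signs so that the exponential relation $e^{\delta_{\alpha}}\{f,g\} = \{e^{\delta_{\alpha}}f, e^{\delta_{\alpha}}g\}$ really holds; everything else is a routine consequence.
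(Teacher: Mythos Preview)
Your proposal is correct and follows essentially the same route as the paper: the paper's entire argument is the single identity $\{e^{\delta_{\alpha}}\Theta,e^{\delta_{\alpha}}\Theta\}=e^{\delta_{\alpha}}\{\Theta,\Theta\}$ stated just before the proposition, and you supply the details behind it (degree count, $\delta_{\alpha}$ as a derivation via Jacobi, exponentiation, and automatic compatibility with $\omega$). One minor remark: the paper's convention is $\delta_{\alpha}f=\{f,\alpha\}$ rather than $\{\alpha,f\}$, but since $|\alpha|=n$ the two differ only by an overall sign, so $\delta_{\alpha}$ is in either case an even derivation and your argument goes through unchanged.
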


\section{Derived brackets, Lagrangian submanifolds and horizontal homological functions
}
\subsection{Derived brackets}
In order to apply a QP-manifold to the current algebra theory, we consider
a graded cotangent bundle $T^*[n+1]\calM$ of a graded manifold $\calM$.
We denote the natural projection by $\proj: T^*[n+1]\calM \rightarrow \calM$
and the inclusion as zero sections by $\iota: \calM \rightarrow T^*[n+1]\calM$.

Assume a QP-manifold structure $(T^*[n+1]\calM, \omegab, Q)$ 
on the graded cotangent bundle 
with a canonical symplectic form $\omegab = \omega_{can}$
of degree $n+1$ and a homological vector field $Q$.
Note that $\calM$ is trivially a Lagrangian submanifold of $T^*[n+1]\calM$.
We take a homological function $\Theta$ such that  $Q(-) = \bsbv{\Theta}{-}$.

In this section, before considering a current algebra,
we analyze properties of functions on $T^*[n+1]\calM$.
It is regraded as the current algebra theory 
in zero dimension (in the mechanics).
In the next section, we consider functions 
on the mapping space, i.e., current algebras.

We construct a new bilinear bracket called a 
\textit{derived bracket} \cite{Kosmann-Schwarzbach:2007} from 
the graded Poisson bracket $\bsbv{-}{-}$ as
\begin{eqnarray}
\dsbv{f}{g} := - \bsbv{\bsbv{f}{\Theta}}{g}.
\end{eqnarray}
for 
$f, g \in C^{\infty}(T^*[n+1]\calM)$. 
The derived bracket does not necessarily satisfy the identities 
(\ref{Poissonidentity1})--(\ref{Poissonidentity3}) of a graded Poisson bracket.
However, the derived bracket satisfies the the Jacobi type identity (\ref{Poissonidentity4}) from $\sbv{\Theta}{\Theta}=0$ \footnote{It is also called the Leibniz identity.}.
In general, other identities do not hold. 
Especially, it is not necessarily graded skewsymmetric.

\begin{example}\label{ex06}
We consider the QP-manifold of degree $2$, $T^*[2]E[1]$ in Example \ref{ex052}.

The space of functions $C^{\infty}(T^*[2]E[1])$ is expanded by degree, 
$C^{\infty}(T^*[2]E[1]) = \oplus_{k \geq 0} C_k(T^*[2]E[1])$,
where $C_k(T^*[2]E[1])$ is the space of degree $k$ functions.
A QP-manifold of degree two induces a Courant algebroid \cite{Roy01}.
For it, we consider the degree zero and degree one subspace $C_0(T^*[2]E[1]) \oplus C_1(T^*[2]E[1])$. $C_0(T^*[2]E[1])$ is equivalent to the space of functions on $M$, $C^{\infty}(M) \simeq C_0(T^*[2]E[1])$.
$C_1(T^*[2]E[1])$ is equivalent to the space of sections of $E$, $\Gamma(E) \simeq C_1(T^*[2]E[1])$.

The graded Poisson bracket and \textit{derived brackets} gives operations of induced algebroid. Nontrivial operations are
\begin{eqnarray}
\bracket{a}{b} &=& \bsbv{a}{b},
\label{operation1}
\\
\rho(a) f &=& \bsbv{\bsbv{a}{\Theta}}{f},
\label{operation2}
\\
\courant{a}{b} &=& \bsbv{\bsbv{a}{\Theta}}{b}.
\label{operation3}
\end{eqnarray}
The homological condition $\bsbv{\Theta}{\Theta}=0$ gives identities
between three operations \eqref{operation1}--\eqref{operation3}.
These conditions are equivalent to the following definition of a Courant algebroid \cite{Courant, LWX}.
\begin{definition}\label{courantdefinition}
A Courant algebroid is a vector bundle $E \longrightarrow M$,
and it has a nondegenerate symmetric bilinear form
$\bracket{-}{-}$ 
on the bundle, a bilinear operation $\courant{-}{-}$ on $\Gamma (E)$,
and a bundle map called an anchor map,
$\rho: E \longrightarrow TM$, satisfying the following properties:
%
\begin{eqnarray}
&& 1, \quad \courant{e_1}{\courant{e_2}{e_3}} = \courant{\courant{e_1}{e_2}}{e_3} + \courant{e_2}{\courant{e_1}{e_3}}, 
  \label{courantdef1}
\\
&& 2, \quad \rho(\courant{e_1}{e_2}) = [\rho(e_1), \rho(e_2)], 
  \label{courantdef2}
\\
&& 3, \quad \courant{e_1}{f e_2} = f (\courant{e_1}{e_2})
+ (\rho(e_1)f)e_2, 
  \label{courantdef3}
 \\
&& 4, \quad \courant{e_1}{e_2} = \frac{1}{2} {\cal D} \bracket{e_1}{e_2},
  \label{courantdef4}
\\ 
&& 5, \quad \rho(e_1) \bracket{e_2}{e_3}
= \bracket{\courant{e_1}{e_2}}{e_3} + \bracket{e_2}{\courant{e_1}{e_3}},
  \label{courantdef5}
\end{eqnarray}
where 
$e_1, e_2$, and $e_3$ are sections of $E$, $f$ is a function on
$M$ and 
${\cal D}$ is a map from the space of functions on $M$ to $\Gamma (E)$, 
defined as 
$\bracket{{\cal D}f}{e} = \rho(e) f = \bsbv{\bsbv{e}{\Theta}}{f}$.
\end{definition}
\end{example}

\subsection{Subspaces closed under derived brackets}
\noindent
%
For current algebras,
we define a subspace $\Cn(T^*[n+1]\calM)$ of $C^{\infty}(T^*[n+1]\calM)$.
\begin{definition}
A space of functions of degree equal to or less than $n$ on $T^*[n+1]\calM$ is
denoted by $\Cn(T^*[n+1]\calM)$.
i.e., 
it is defined by $\Cn(T^*[n+1]\calM)
= \{f \in C^{\infty}(T^*[n+1]\calM)| |f| \leq n \}$.
\end{definition}
\begin{proposition}
The subspace $\Cn(T^*[n+1]\calM)$ is closed not only under the
graded Poisson bracket $\bsbv{-}{-}$, but also
under the derived bracket $\courant{-}{-} = \bsbv{\bsbv{-}{\Theta}}{-}$.
i.e., for every functions $f, g \in \Cn(T^*[n+1]\calM)$,
\begin{eqnarray}
&& \bsbv{f}{g} \in \Cn(T^*[n+1]\calM),
\\
&& \bsbv{\bsbv{f}{\Theta}}{g} \in \Cn(T^*[n+1]\calM).
\end{eqnarray}
\end{proposition}
It is proved by simple degree counting.

A graded submanifold $\calL \subset \calN$ of $\calN$ is a pair 
$\calL = (L, \calO^{\prime})$ of 
a submanifold $L \subset N$ and 
a subshief $\calO^{\prime} \subset \calO$.
Here $\calO^{\prime}$ is locally a subalgebra of
$C^{\infty}(U) \otimes S^{\bullet}(V)$.

\if0
Let 
\begin{eqnarray}
Z(\calI) &=& \{ x \in M | f(x) = 0, f \in \calI \cap C^{\infty}(M) \}
\end{eqnarray}
A homogeneous ideal is regular if there exists an open set
$N = Z(\calI)$
\fi

If for every $f, g \in C^{\infty}(\calL)$ 
$\bsbv{f}{g} = 0$, $\calL$ is called \textit{isotropic}.
The condition is also denoted by $\omega|_{\calL} =0$.
If a isotropic submanifold $\calL$ satisfies $\mathrm{dim}(\calL) = \frac{1}{2} \mathrm{dim}(\calN)$, it is called a Lagrangian submanifold. 
A space of functions on $\calL$ is denoted by $C^{\infty}(\calL)$.
The inclusion map is denoted by 
$\iota: \calL \rightarrow \calN$, 

For a Lagrangian submanifold $\calL \subset \calN = T^*[n+1]\calM$,
let $C^{\infty}(\calL)$ be the space of functions.
Using the map $\iota: \calL \rightarrow \calN$, $C^{\infty}(\calL)$
is regarded as a subspace of $C^{\infty}(T^*[n+1]\calM)$.
Under this identification, for every $f, g \in C^{\infty}(\calL)$,
$\sbv{f}{g}=0$.
Thus the derived bracket $\bsbv{\bsbv{-}{-}}{-}$ satisfies identities
(\ref{Poissonidentity1})--(\ref{Poissonidentity4}).
$\bsbv{\bsbv{-}{-}}{-}$ gives a graded Poisson bracket on $C^{\infty}(\calL)$.


\subsection{Poisson algebra induced from derived brackets 
and horizontal homological functions}\label{PAfromhorizontal}
We introduce a restriction to the zero locus of functions 
$J_1, J_2 \in \Cn(T^*[n+1]\calM)$ on $\calM$
in the derived bracket,
called a \textsl{projected derived bracket} as
\begin{eqnarray}
\ssbv{J_1}{J_2} := \bsbv{\bsbv{J_1}{\Theta}}{J_2}|_{\calM},
\end{eqnarray}
where $|_{\calM}$ is the projection to the zero locus of $T^*[n+1]\calM$.
The new bracket $\ssbv{-}{-}$ is the restriction of the derived bracket 
to a Lagrangian submanifold of $T^*[n+1]\calM$.
Since the graded Poisson bracket $\bsbv{-}{-}$ is of degree $-n-1$,
the bracket $\ssbv{-}{-}$ is of degree $-n$.

If $\Theta$ is a homological function, i.e. $\bsbv{\Theta}{\Theta}=0$,
the derived bracket
$\dsbv{-}{-} = \bsbv{\bsbv{-}{\Theta}}{-}$ satisfies the graded Leibniz identity,
\begin{eqnarray*}
\dsbv{f}{\dsbv{g}{h}}&=&\dsbv{\dsbv{f}{g}}{h} 
+ (-1)^{(|f|-n)(|g|-n)}\dsbv{g}{\dsbv{f}{h}}.
\end{eqnarray*}
A projected derived bracket is given by 
$\ssbv{-}{-} = \dsbv{-}{-}|_{\calM}$.
Thus the projected derived bracket also satisfies
a graded Leibniz identity.

Now we take a specific homological function $\Theta$
called a \textit{horizontal homological function}.
\begin{definition}
A homological function $\Theta$ is called a \textit{horizontal homological function} if $\bsbv{\Theta}{J}|_{\calM} =0$ for all functions $J \in \Cn(T^*[n+1]\calM)$.
\end{definition}

\if0
The derived bracket is not graded symmetric in general.
\begin{eqnarray}
\bsbv{\bsbv{f}{\Theta}}{g} \neq -(-1)^{(|f| - n)(|g| - n)}
\bsbv{\bsbv{g}{\Theta}}{f}.
\end{eqnarray}
In fact,
\fi

The derived bracket satisfies
\begin{eqnarray}
\bsbv{\bsbv{f}{\Theta}}{g} &=& -(-1)^{(|f| - n)(|g| - n)} 
\bsbv{\bsbv{g}{\Theta}}{f} 
\nonumber \\ && 
- (-1)^{(|f| - n -1)(|g| - n -1)} \bsbv{\Theta}{\bsbv{f}{g}},
\end{eqnarray}
for any function $f, g \in \Cn(T^*[n+1]\calM)$.
If $\Theta$ is a horizontal homological function, 
the restriction of the last term to $\calM$
vanishes. Then the projected derived bracket is skew symmetric,
\begin{eqnarray}
\ssbv{f}{g} = -(-1)^{(|f| - n)(|g| - n)} \ssbv{g}{f}.
\end{eqnarray}

\if0
However, if $\bsbv{f}{g}=0$, the bracket is graded symmetric:
\begin{eqnarray}
\bsbv{\bsbv{f}{\Theta}}{g}
= -(-1)^{(|f| - n)(|g| - n)}
\bsbv{\bsbv{g}{\Theta}}{f}.
\label{skewderivedbracket}
\end{eqnarray}
\fi

The derive bracket does not necessarily satisfies the Leibniz rules
(\ref{Poissonidentity2}) and (\ref{Poissonidentity3}). In fact,
\begin{eqnarray}
\bsbv{\bsbv{fg}{\Theta}}{h}
&=& f \bsbv{\bsbv{g}{\Theta}}{h}
+ (-1)^{|g|(|h|+1-n)} \bsbv{\bsbv{f}{\Theta}}{h}g
\nonumber \\
&& + (-1)^{|g|} \bsbv{f}{\Theta} \bsbv{g}{h}
+ (-1)^{(|g|+1)(|h| -n)} \bsbv{f}{h} \bsbv{g}{\Theta}
\nonumber \\ 
&=& f \bsbv{\bsbv{g}{\Theta}}{h}
- (-1)^{(|g|-n)(|h|-n)} (-1)^{|f|}
\bsbv{\bsbv{h}{\Theta}}{f}g
\nonumber \\ && 
- (-1)^{(|g|-n-1)(|h|-n-1)}
\bsbv{\Theta}{\bsbv{f}{h}}g
+ (-1)^{|g|} \bsbv{f}{\Theta} \bsbv{g}{h}
\nonumber \\
&& 
+ (-1)^{(|g|+1)(|h| -n)} \bsbv{f}{h} \bsbv{g}{\Theta},
\label{leibnizderivedbracket}
\end{eqnarray}
where $f, g, h \in C_n(T^*[n+1]\calM)$.
If $\Theta$ is a horizontal homological function, 
the projected derived bracket $\ssbv{-}{-}$
satisfies the Leibniz rule since the third term 
and the fourth term in Eq.~\eqref{leibnizderivedbracket}
vanish if the equation is restricted to $\calM$,
\begin{eqnarray}
\ssbv{fg}{h}
&=& 
f \ssbv{g}{h} 
+ 
(-1)^{|g|(|h|+1-n)} \ssbv{f}{h} g
\label{leibnizderivedbracket02}
\end{eqnarray}

From the discussion above, the following proposition is obtained.
\begin{proposition}\label{theoremcurrenthomologicalfunction}
Let $\Theta$ be a horizontal homological function. Then,
a projected derived bracket 
$\ssbv{-}{-}$ is a graded Poisson bracket of degree $-n$.
\end{proposition}
If the projected derived bracket $\ssbv{-}{-}$ is nondegenerate,
the graded symplectic structure $\omegas$ is
induced on $\calM$.



\begin{example}
In the QP-manifold $T^*[2]\calM = T^*[2]E[1]$ in Example \ref{ex052}, we take 
$E = TM$, i.e., $\calM= T[1]M$.
Take local coordinates on $T[1]M$ are $(x^i, p_i)$ of degree $(0, 1)$,
and canonical conjugate coordinates on $T^*[2]$ are
$(\xi_i, q^i)$ of degree $(2, 1)$,
where the degree one coordinate $\eta^a$ is decomposed to 
canonical conjugate coordinates $p_i$ on $T[1]$ and $q^i$ on $T^*[1]$.
The graded symplectic form is
\begin{eqnarray}
\omega = \delta x^i \wedge \delta \xi_i 
+ p_i \wedge \delta q^i,
\end{eqnarray}
which gives canonical Poisson brackets,
\begin{eqnarray}
\sbv{x^i}{\xi_j} &=& \delta^i_j,
\\
\sbv{p_i}{q^j} &=& \delta_i^j.
\end{eqnarray}
A general form of the homological function $\Theta$ of degree $3$
is 
\begin{eqnarray}
\Theta &=& \rho^i_j(x) \xi_{i} q^j 
+ \tau^{ij}(x) \xi_{i} p_j 
+ \frac{1}{3!} H_{ijk}(x) q^i q^j q^k
+ \frac{1}{2} F_{ij}^k(x) q^i q^j p_k
\nonumber \\ && 
+ \frac{1}{2} Q_i^{jk}(x) q^i p_j p_k
+ \frac{1}{3!} R^{ijk}(x) p_i p_j p_k.
\end{eqnarray}
Elements on $C_{\geq 1}(T^*[2]T[1]M)$ are
$J_0 = f(x)$ and $J_1 = v^i(x) p_i + \alpha_i(x) q^i$.
The restriction to $\calM$ is given by $q^i=\xi_i=0$.

If $\bsbv{\Theta}{J}|_{\calM} =0$ is required for all $J_0$ and $J_1$,
\begin{eqnarray}
\tau^{ij}(x) =0, Q_i^{jk}(x) = R^{ijk}(x) =0.
\end{eqnarray}
One solution of such horizontal homological functions
is 
\begin{eqnarray}
\Theta = \xi_{i} q^i + \frac{1}{3!} H_{ijk}(x) q^i q^j q^k.
\label{SCATheta}
\end{eqnarray}
$\Theta$ in Eq.~\eqref{SCATheta} satisfies
$\sbv{\Theta}{\Theta}=0$ if and only if $H = \frac{1}{3!}H_{ijk}(x) \rd x^i
\wedge \rd x^j \wedge \rd x^k$ is a closed $3$-form.
\end{example}

In our current algebra theory, we start at a 
cotangent QP-manifold bundle 
with a homological function $\Theta$,
$(\calM,T^*[n+1]\calM, \omegab, \Theta)$.
A projected derived bracket for functions 
$f, g \in \Cn(T^*[n+1]\calM)$,
$\ssbv{f}{g}= \bsbv{\bsbv{f}{\Theta}}{g}|_{\calM}$
gives a Poisson bracket for our purpose.
In this paper, we assume that the bracket $\ssbv{-}{-}$ is nondegenerate.

\if0
In general, a function $\alpha$ of degree $n+1$ define a vector field
on $\calM$ as $Q_{\alpha} = \sbv{\alpha}{-}_s$.
Note that 
since $\{\alpha, \alpha \}_s$ is not necessarily zero,
$(\calM,\{-,-\}_s,\alpha)$ is not necessarily a QP-manifold.
We call $T^*[n+1]\calM$ the big QP-manifold,
and $\calM$ the small (twisted QP-)manifold throughout this paper. 
\fi

\if0
\begin{theorem}\cite{Ikeda:2013wh}
Let $(\calM,\{-,-\}_s,\alpha)$ be a QP-manifold.
Then there always exists a homological function $\Theta$ on 
$T^*[n+1]\calM$ such that $(\calM,T^*[n+1]\calM,\omegab,\Theta,\alpha)$ 
is a QP-pair.
\end{theorem}

\begin{remark}
In \cite{Voronov:20051, Voronov:20052}, T.~Voronov introduced 
a structure called the V-data in the $L_{\infty}$-category. 
A QP pair is also regarded as a geometric analogue of the V-data. 
A canonical function is
the Maurer-Cartan element in the V-data. 
\end{remark}
\fi

\subsection{Another Poisson algebras induced from derived brackets}\label{anotherBN}
The horizontal condition for a homological function is needed to make 
the projected derived bracket on $\Cn(T^*[n+1]\calM)$ a graded 
Poisson bracket.
Another method to make the projected derived bracket a graded 
Poisson bracket is to consider a subspace of $\Cn(T^*[n+1]\calM)$.
This is used for current algebras in Section \ref{CAofAKSZ}.

 We consider a function constant along the fiber,
which is a pullback of the inclusion map to zero sections,
$\iota: \calM \rightarrow T^*[n+1]\calM$.
The space of such functions is
\begin{eqnarray}
\Bn(T^*[n+1]\calM, 0):= \{f = \iota^* \tilde{f} \in \Cn(T^*[n+1]\calM)| 
\tilde{f} \in \Cn(\calM)
\}.
\end{eqnarray}
$(\Bn(T^*[n+1]\calM, 0)$ is 
a graded Poisson algebra by the bracket
$\bsbv{\bsbv{-}{\Theta}}{-}$ since $\bsbv{f}{g} =0$.
Actually, it is isomorphic to the Poisson algebra of the space of
functions of degree equal to or less than $n$ on $\calM$, 
$(\Cn(\calM), \ssbv{-}{-}= \bsbv{\bsbv{-}{\Theta}}{-}|_{\calM})$.

For a function $\alpha \in \Bn(T^*[n+1]\calM, 0)$ of degree $n+1$,
we consider the space of functions twisted by $\alpha$,
\begin{eqnarray}
\Bn(T^*[n+1]\calM, \alpha)
:= \{e^{\mathrm{ad} (\alpha)} f | f \in \Bn(T^*[n+1]\calM, 0) \}.
\label{Bntwist}
\end{eqnarray}
Since 
$\bsbv{e^{\mathrm{ad} (\alpha)} f}
{e^{\mathrm{ad} (\alpha)}g} =
e^{\mathrm{ad} (\alpha)} \bsbv{f}{g}=0$,
a pullback derived bracket $\ssbv{-}{-}= \bsbv{\bsbv{-}{\Theta}}{-}|_{\calM}$ 
is a Poisson bracket again, i.e.,
$(\Bn(T^*[n+1]\calM, \alpha)$ is a Poisson algebra.
We will see that this is the current space of AKSZ sigma models
in Section \ref{CAofAKSZ}.

Note that the Poisson algebra with the derived bracket
$\ssbv{e^{\mathrm{ad} (\alpha)} f}{e^{\mathrm{ad} (\alpha)} g}
= \bsbv{\bsbv{e^{\mathrm{ad} (\alpha)} f}
{\Theta}}{e^{\mathrm{ad} (\alpha)}g} |_{\calM}$
satisfies
\begin{eqnarray}
\ssbv{e^{\mathrm{ad} (\alpha)} f}{e^{\mathrm{ad} (\alpha)} g} 
= e^{\mathrm{ad} (\alpha)} \ssbv{f}{g}^{\alpha},
\end{eqnarray}
where $\ssbv{f}{g}^{\alpha}
:= \bsbv{\bsbv{f}{e^{- \mathrm{ad} (\alpha)} \Theta}}{g} |_{\calM}$,
because
$\bsbv{\bsbv{e^{\mathrm{ad} (\alpha)} f}{\Theta}}{e^{\mathrm{ad} (\alpha)} g}
= e^{\mathrm{ad} (\alpha)} \bsbv{\bsbv{f}{e^{- \mathrm{ad} (\alpha)} \Theta}}{g}$.





\section{Supergeometric formalism of current algebras}
\noindent
The idea is that a current algebra theory with \textit{anomalous terms}
is a Poisson algebra of functions on a QP-manifold 
$\Map(T[1]\Sigma_n,T^*[n+1]\calM)$.
The current algebra theory in this section can be regarded as
BRST-BV-AKSZ formalism of the normal current algebra theory,
which includes physical and ghost degrees of freedom.
\subsection{AKSZ-BFV constructions on 
graded cotangent bundles and projected derived brackets
}
\noindent
In this section, we construct structures on the mapping space, 
induced from a QP-structure 
on a graded cotangent bundle in the previous section.

Let $X_{n+1} = \bR \times \Sigma_n$ be an
$n+1$ dimensional spacetime manifold,
where $\Sigma_n$ is an (compact, orientable) $n$-dimensional manifold,
$\bR$ is a time direction and $\Sigma_n$ is an $n$-dimensional 
space direction.
$\calX = T[1]\Sigma_n$ is a super tangent bundle, whose fiber is 
shifted by one.
Let $(\sigma^{\mu}, \theta^{\mu})$ be local coordinate on $T[1]\Sigma_n$,
where $\sigma^{\mu}$ is a local coordinate of degree zero on $\Sigma_n$
and $\theta^{\mu}$ is a local coordinate of degree one on $T[1]$.

Assume a super differential $D$ induced from a normal de Rham differential 
$\rd$ and a compatible Berezin measure $\mu = \mu_{T[1]\Sigma_n}$.
Locally, for the de Rham differential $\rd = \rd \sigma^{\mu} \frac{\partial}{\partial \sigma^{\mu}}$, the super differential is given by
$D = \theta^{\mu} \frac{\partial}{\partial \sigma^{\mu}}$.

Let a target space $\calM$ be a graded manifold.
For $\calM$, we suppose that a cotangent bundle 
$T^*[n+1]\calM$ is a QP-manifold of degree $n+1$,
$(\calM, T^*[n+1]\calM, \omegab, \Theta, \alpha)$
with a degree $n+1$ twist function $\alpha$, 
where $\Theta$ is a horizontal homological function.
Then a projected derived bracket is defined by 
$\ssbv{-}{-} = \bsbv{\bsbv{-}{\Theta}}{-}|_{\calM}$.
Moreover, we suppose that $\ssbv{-}{-}$ is nondegenerate.

In this setting, a current algebra on the super phase space 
$\Map(T[1]\Sigma_n,\calM)$ is defined. 

A QP-structure on a graded cotangent bundle $T^*[n+1]\calM$ 
is mapped to the mapping spaces $\Map(T[1]\Sigma_n, T^*[n+1]\calM)$ by
the AKSZ construction \cite{Alexandrov:1995kv, Cattaneo:2001ys, Roytenberg:2006qz}. 
For this, a \textit{transgression map} is introduced as follows.

An evaluation map $\ev: T[1]\Sigma_n \times \Map(T[1]\Sigma_n, T^*[n+1]\calM)
\rightarrow T^*[n+1]\calM$ is defined by
${\rm ev}:(z, f) \longmapsto f(z)$,
for any
$z \in T[1]\Sigma_n$ and $f \in \Map(T[1]\Sigma_n, T^*[n+1]\calM)$.
A chain map on the space of graded differential forms,
$\mu_*:
\Omega^{\bullet}(T[1]\Sigma_n \times \Map(T[1]\Sigma_n, T^*[n+1]\calM))
\rightarrow \Omega^{\bullet}(\Map(T[1]\Sigma_n, T^*[n+1]\calM))$
 is the fiber integration with respect to a Berezin measure $\mu$ 
on $T[1]\Sigma_n$, which is defined as
$$\mu_* \omega(f)(v_1, \ldots, v_k)
 = \int_{T[1]\Sigma_n} \mu(z)
 \omega(z, f) (v_1, \ldots, v_k),$$
where $\omega$ is a graded differential form,
$v_{i}$ is a vector fields on $T[1]\Sigma_n$
and
$\int_{T[1]\Sigma_n} \mu$ is a Berezin integration on $T[1]\Sigma_n$.
The composition $\mu_* \ev^*: \Omega^{\bullet}(\calM)
\longrightarrow \Omega^{\bullet}(\Map(\calX, \calM))$
is a \textit{transgression map}.
If we define
\begin{eqnarray}
\bomegab &=& \mu_* \ev^* \omegab, 
\qquad 
\bTheta
=
\mu_* \ev^* \Theta,
\end{eqnarray}
$(\bomega_b, S_b)$ gives a QP-structure of degree $1$ on 
$\Map(T[1]\Sigma_n, T^*[n+1]\calM)$ \cite{Alexandrov:1995kv, Cattaneo:2001ys, Roytenberg:2006qz}.

We consider a projected derived bracket on the mapping space induced from the 
QP-structure on $\Map(T[1]\Sigma_n,T^*[n+1]\calM)$.
\begin{eqnarray}
\ssbv{-}{-} :=
\bsbv{\bsbv{ -}{\bTheta}}{-}|_{\Map(T[1]\Sigma_n,\calM)},
\end{eqnarray}
$\ssbv{-}{-}$ is a graded bracket on $\Map(T[1]\Sigma_n,\calM)$.
%
Here suppose that $\bTheta$ is a horizontal homological function. 
Then, applying Proposition \ref{theoremcurrenthomologicalfunction} 
to the mapping space,
the bracket $\ssbv{-}{-}$ is a graded Poisson bracket.
Note that the graded Poisson bracket $\ssbv{-}{-}$ on 
$\Map(T[1]\Sigma_n, \calM)$ 
is a normal Poisson bracket $\{-,-\}_{PB}$ because of degree $0$.
Since we assumed that the bracket $\ssbv{-}{-}$ on $\calM$
is nondegenerate, 
the Poisson bracket $\ssbv{-}{-}$ on the mapping space is nondegenerate 
and $\calM$ has a symplectic form 
$\bomegas$ induced from $\ssbv{-}{-}$.

\subsection{BFV currents and supergeometric current algebras}
\noindent
In this section, we construct a supergeometric current algebra
from a QP-structure on a mapping space.

Take a function $J$ on the space equal to or less than degree $n$, 
$\Cn(T^*[n+1]\calM)$.
Then, a functional $\calJ$ on the mapping space 
$\Map(T[1]\Sigma_n, T^*[n+1]\calM)$
is given from $J$ by the transgression map.
Precisely, the map is defined by 
$$
\calJ(\epsilon) = \mu_* \epsilon \, \ev^* J,
$$
where $\epsilon$ is a test function on $T[1]\Sigma_n$
of degree $n - |J|$.
Then, $|\calJ(\epsilon)| =0$.
Functionals $\calJ(\epsilon)$ are elements of 
$\Map(T[1]\Sigma_n, T^*[n+1]\calM)$.
the space of functionals $\calJ(\epsilon)$ for $J \in \Cn(T^*[n+1]\calM)$
consist of a Poisson algebra by the Poisson bracket $\ssbv{-}{-}$,
since $\Cn(T^*[n+1]\calM)$ is closed under the bracket $\ssbv{-}{-}$
and the transgression map is commutative with the graded Poisson bracket,
\begin{eqnarray}
\mu_* \ev^* \bsbv{f}{g} = \bsbv{\mu_* \ev^* f}{\mu_* \ev^* g}.
\label{Poissontransg}
\end{eqnarray}

However, the Poisson algebra of $\calJ(\epsilon)$ does not give
physical current algebras yet.
We consider twist of functionals $\calJ(\epsilon)$
on $\Map(T[1]\Sigma_n,T^*[n+1]\calM)$ by a particular twisting function. 
Proper twist of $\calJ(\epsilon)$ gives current algebras.

For a differential $D$ on $\calX = T[1]\Sigma_n$,
we denote by the same notation $D$ 
the vector field on $\Map(T[1]\Sigma_n, \calM)$
of degree $1$ induced by $D$.
Take local coordinates on $T[1]\Sigma_n$ as 
$(\sigma^{\mu}, \theta^{\mu})$ of degre $(0, 1)$.
We define a twisting function by 
$\bvarthetas = \iota_{D} \mu_* \ev^* \proj^* \varthetas$,
where $\varthetas$ is the canonical $1$-form for a symplectic form
$\omegas$ on $\calM$ such that $\omegas = - \delta \varthetas$.
The functional $\bvarthetas$ 
is the same as the kinetic term of the BV action functional
in the AKSZ sigma model on $\Map(T[1]\Sigma_n,\calM)$
and satisfies 
$\bsbv{\bvarthetas}{\bvarthetas}=0$ and $\ssbv{\bvarthetas}{\bvarthetas}=0$.

For any function $J$ on $T^*[n+1]\calM$,
we consider 
twisting of the functional $\calJ(\epsilon)$
with respect to $\bvarthetas$,
\begin{eqnarray}
e^{\mathrm{ad}(\bvarthetas)}\calJ(\epsilon)
= e^{\mathrm{ad}(\bvarthetas)}\mu_*\epsilon \, \ev^*J.
\end{eqnarray}
where 
$e^{\mathrm{ad}(\bvarthetas)} f = f + \bsbv{f}{\bvarthetas}
+ \frac{1}{2} \bsbv{\bsbv{f}{\bvarthetas}}{\bvarthetas}
+ \frac{1}{3!} \bsbv{\bsbv{\bsbv{f}{\bvarthetas}}{\bvarthetas}}{\bvarthetas}
+ \cdots$.

We define a current in our formalism.
\begin{definition}
A BFV current $\bJ(\epsilon)$ on $\Map(T[1]\Sigma_n,\calM)$
with respect to a function $J \in \Cn(T^*[n+1]\calM)$ is 
the restriction to $\Map(T[1]\Sigma_1,\calM)$ of the transgression of $J$, 
i.e., it is defined by 
\begin{eqnarray}
\bJ(\epsilon)=
\bJ_J(\epsilon) := e^{\mathrm{ad}(\bvarthetas)}\calJ(\epsilon)
|_{\Map(T[1]\Sigma_1,\calM)}
.
\end{eqnarray}
The space of BFV currents are denoted by $\CAn(T^*[n+1]\calM)$.
\end{definition}
The Poisson bracket of currents $\bJ_1(\epsilon_1)$
and $\bJ_2(\epsilon_2)$ associated 
to two functions $J_1, J_2 \in \Cn(T^*[n+1]\calM)$ is calculated
from the derived bracket on $T^*[n+1]\calM$.

We prove it by taking the Darboux coordinate on $T^*[n+1]\calM$,
$(p_{I^{(i)}}, q^{I^{(n+1-i)}})$ 
of degree $(i, n+1-i)$, where $p_{I^{(i)}}$ is a coordinate on $\calM$
and $q^{I^{(n+1-i)}}$ is a coordinate on the fiber $T^*[n+1]$, 
where $i=0,\cdots, \floor{\frac{n+1}{2}}$.
The canonical graded symplectic structure is
$\omegab = 
\sum_{i=0}^{n} 
(-1)^{i(n-i)}
\delta p_{I^{(i)}} \wedge \delta q^{I^{(n+1-i)}}$.
Then, the Poisson bracket $\bsbv{-}{-}$ is
$
\bsbv{p_{I^{(i)}}}{q^{J^{(j)}}} = \delta_{I^{(i)}}{}^{J^{(n+1-j)}}
\delta_{i}{}^{j}$.
For local coordinates on the mapping space
corresponding to $(p_{I^{(i)}}, q^{I^{(n+1-i)}})$, 
we take superfields
$(\bp_{I^{(i)}}, \bq^{I^{(n+1-i)}})$, which
are sections of $T[1]\Sigma_n \otimes
\bx^* (T^*[n+1]\calM_{(i)})$ and
$T[1]\Sigma_n \otimes
\bx^* (T^*[n+1]\calM_{(n+1-i)})$, 
where $T^*[n+1]\calM_{(i)}$ is a degree $i$ part of 
$T^*[n+1]\calM$.
Here the degree 0 coordinate is denoted by
$\bx^I = \bp_{I^{(0)}}$.
$\bx^I$ is regarded as a map 
$\bx^I : T[1]\Sigma_n \rightarrow M$.
The graded symplectic structure on the mapping space is 
$\bomegab = 
\sum_{i=0}^{n} \int_{T[1]\Sigma_n} 
(-1)^{i(n-i)}
\delta \bp_{I^{(i)}} \wedge \delta \bq^{I^{(n+1-i)}}$.
The restriction to $\Map(T[1]\Sigma_1,\calM)$ is
given by $\bq^{I^{(n+1-i)}}=0$ for all $i$.

First, we calculate the Poisson bracket of untwisted functional
$\calJ_{J_1}(\epsilon_1)$ and $\calJ_{J_2}(\epsilon_2)$.
Since a transgression map $\mu_* \ev^*$ is a Poisson map \eqref{Poissontransg}
from $T^*[n+1]\calM$ to $\Map(T[1]\Sigma_n, T^*[n+1]\calM)$,
we obtain
\begin{eqnarray}
\{\calJ_{J_1}(\epsilon_1),\calJ_{J_2}(\epsilon_2)\}_{PB}
&=&
\ssbv{\mu_* \epsilon_1 \ev^* J_1}{\mu_* \epsilon_2 \ev^* J_2}
\nonumber \\
&=& 
\mu_* \epsilon_1 \epsilon_2 
\ev^* \ssbv{J_1}{J_2}
\nonumber \\
&=& 
\mu_* \epsilon_1 \epsilon_2
\ev^* \bsbv{\bsbv{J_1}{\Theta}}{J_2}
|_{\Map(T[1]\Sigma_n,\calM)}.
\label{untwisted01}
\end{eqnarray}

Next we calculated the Poisson bracket of functionals twisted by $\bvarthetas$.
$\bvarthetas = \iota_{D} \mu_* \ev^* \vartheta_s$ is a Hamiltonian 
function for the vector field $D$ with respect to $\omegas$, i.e.,
$X_{\bvarthetas} = D$, where the Hamiltonian vector field $X_f$ for a function 
$f$ is defined by 
$\iota_{X_f} \omega = - \delta f$.
From $X_{\bvarthetas} = D$, we have
\begin{eqnarray}
\ssbv{\bvarthetas}{\mu_* \ev^* f}
&=&
(-1)^{|\bvarthetas|} \iota_{\hat{D}} \iota_{X_{\mu_* \ev^* f}} \bomega
\nonumber \\ &=&
- \iota_{\hat{D}} \mu_* \ev^* \delta f
= \left( \int \rd^{n+1}\sigma \rd^{n+1}\theta \bbd f(\sigma, \theta) \right),
\label{smallHamiltonian}
\end{eqnarray}
\if0
In summary,
\begin{eqnarray}
\sbv{\iota_{\hat{D}} \mu_* \ev^* \vartheta}{\mu_* \ev^* f} 
&=& - \iota_{\hat{D}} \mu_* \ev^* d f
\left(= \int d^{n}\sigma d^{n}\theta \bbd f(\sigma, \theta) \right),
\end{eqnarray}
\fi
for $f \in C^{\infty}(\calM)$.

Using \eqref{smallHamiltonian}, we can calculate the following formula 
fore Poisson bracket on $\Map(T[1]\Sigma_n,\calM)$, 
$\sbv{-}{-}_{PB}= \ssbv{-}{-}$
and bracket $\Map(T[1]\Sigma_n, T^*[n+1]\calM)$ and $\bsbv{-}{-}$:
\begin{eqnarray}
&& \sbv{e^{\mathrm{ad}(\bvarthetas)} \bq^{J^{(n+1-j)}}(\sigma, \theta)}
{\bp_{I^{(i)}}(\sigma^{\prime}, \theta^{\prime})}_{PB}
\nonumber \\ &=& 
\ssbv{e^{\mathrm{ad}(\bvarthetas)} \bq^{J^{(n+1-j)}}(\sigma, \theta)}
{\bp_{I^{(i)}}(\sigma^{\prime}, \theta^{\prime})}
\nonumber \\ &=& 
\ssbv{\bq^{J^{(n+1-j)}}(\sigma, \theta)}
{\bp_{I^{(i)}}(\sigma^{\prime}, \theta^{\prime})}
+ 
\delta_{I^{(i)}}{}^{J^{(n+1-j)}}
\delta_{i}{}^{j}
\bbd (\delta(\sigma - \sigma^{\prime})
\delta(\theta - \theta^{\prime}))
\nonumber \\
&=& 
\bsbv{\bsbv{\bq^{J^{(n+1-j)}}(\sigma, \theta)}{S}}{\bp_{I^{(i)}}(\sigma^{\prime}, \theta^{\prime})}
+ \bsbv{q^{J^{(n+1-j)}}}
{p_{I^{(i)}}}
\bbd (\delta(\sigma - \sigma^{\prime})
\delta(\theta - \theta^{\prime})),
\label{Poissb11}
\end{eqnarray}
using local coordinates $(\sigma^{\mu}, \theta^{\mu})$ 
on $T[1]\Sigma_n$.
By applying the same equation to general BFV currents,
we obtain the general formula,
\begin{eqnarray}
\{\bJ_{J_1}(\epsilon_1),\bJ_{J_2}(\epsilon_2)\}_{PB}
&=&
\left(
e^{\mathrm{ad}(\bvarthetas)} \mu_* \epsilon_1 \epsilon_2
\ev^* \bsbv{\bsbv{J_1}{\Theta}}{J_2}
\right.
\nonumber \\
&& \left.
-e^{\mathrm{ad}(\bvarthetas)}
\mu_*
(D \epsilon_1) \epsilon_2
\ev^* \bsbv{J_1}{J_2}
\right)
|_{\Map(T[1]\Sigma_n,\calM)}
\label{supercurrentcommutation}
\\
&=& -\bJ_{[J_1,J_2]_D}
(\epsilon_1 \epsilon_2)
- e^{\mathrm{ad}(\bvarthetas)} 
\mu_*
(D \epsilon_1) \epsilon_2
\ev^* \bsbv{J_1}{J_2}
|_{\Map(T[1]\Sigma_n,\calM)}
.
\nonumber
\end{eqnarray}
The second term is in Equation (\ref{supercurrentcommutation})
comes from terms proportional to a derivative of the delta function
in Equation \eqref{Poissb11}.
Here we use $\bsbv{\bvarthetas}{\bvarthetas}=0$ and $\ssbv{\bvarthetas}{\bvarthetas}=0$, and
integrating by parts for terms with test functions.

In general, 
because of twisting by $\bvarthetas$, closedness of currents $\bJ$ under 
the Poisson bracket, 
$\ssbv{-}{-} = \bsbv{\bsbv{-}{\Theta}}{-}|_{\calM}$,
is broken.
Breaking terms are the second terms
in \eqref{supercurrentcommutation} called anomalous terms.
As a result, the Poisson bracket of $\bJ_1$ and $\bJ_2$ 
is given by the Poisson bracket $\bsbv{-}{-}$ and
its derived bracket $\bsbv{\bsbv{-}{\Theta}}{-}$
on the QP-manifold $T^*[n+1]\calM$ 
as in Equation (\ref{supercurrentcommutation}).

We summarize our formulas of the current algebra.
\begin{theorem}\label{supercurrentalgebratheorem}
Let $\Theta$ be a horizontal homological function on $T^*[n+1]\calM$.
For BFV currents $\bJ_{J_1}$ and $\bJ_{J_2}$ associated to 
functions $J_1$, $J_2$ in $\Cn(T^*[n+1]\calM)$,
the Poisson bracket
is given by
\begin{eqnarray}
\{\bJ_{J_1}(\epsilon_1),\bJ_{J_2}(\epsilon_2)\}_{PB}
&=&
\left(
e^{\mathrm{ad}(\bvarthetas)} \mu_* \epsilon_1 \epsilon_2
\ev^* \bsbv{\bsbv{J_1}{\Theta}}{J_2}
\right.
\nonumber \\
&& \left.
-e^{\mathrm{ad}(\bvarthetas)}
\mu_*
(D \epsilon_1) \epsilon_2
\ev^* \bsbv{J_1}{J_2}
\right)
|_{\Map(T[1]\Sigma_n,\calM)}
\label{supercurrentcommutation02}
\\
&=& -\bJ_{[J_1,J_2]_D}
(\epsilon_1 \epsilon_2)
- e^{\mathrm{ad}(\bvarthetas)} 
\mu_*
(D \epsilon_1) \epsilon_2
\ev^* \bsbv{J_1}{J_2}
|_{\Map(T[1]\Sigma_n,\calM)}
.
\nonumber
\end{eqnarray}
where $\epsilon_i$ are test functions for $J_i$ of degree $n-|J_i|$
on $\Map(T[1]\Sigma_n, T^*[n+1]\calM)$.
\end{theorem}

%

The anomalous terms vanish if $J_1$ and $J_2$ commute,
$\bsbv{J_1}{J_2}=0$. Thus, we have
Let $A = Comm_{\leq n}(T^*[n+1]\calM)$ be a subspace of
commutative functions in $\Cn(T^*[n+1]\calM)$ 
under the graded Poisson bracket $\sbv{-}{-}$ on $T^*[n+1]\calM$.
$A = Comm_{\leq n}(T^*[n+1]\calM)$ is not unique in $\Cn(T^*[n+1]\calM)$.
\begin{corollary}
Then, $A$ gives
a current algebra without anomalous terms from the formula
\eqref{supercurrentcommutation02}.
\end{corollary}
We can see how this theory works from the viewpoint of Lagrangian submanifolds.
We do not use the restriction of the simple transgression
of functions on $T^*[n+1]\calM$ 
in the definition of BFV currents in $\Map(T[1]\Sigma_n,\calM)$.
Functions on $\Map(T[1]\Sigma_n,T^*[n+1]\calM)$ are twisted
by $\bvarthetas$ before the restriction to $\Map(T[1]\Sigma_n,\calM)$.
By this procedure, nontrivial anomalous terms are nonzero 
in the current algebra.
We have another equivalent description of the supergeometric current algebra.
$\Map(T[1]\Sigma_n,\calM)$ is a trivial Lagrangian submanifold
of $\Map(T[1]\Sigma_n,T^*[n+1]\calM)$.
If a Lagrangian submanifold $\Map(T[1]\Sigma_n,\calM)$ 
is twisted by $\bvarthetas$,
it is transformed to a different Lagrangian submanifold
$\Map(T[1]\Sigma_n, \widetilde{\calM})$.
The new Lagrangian submanifold $\Map(T[1]\Sigma_n, \widetilde{\calM})$
is also equipped with a Poisson bracket
given by the projected derived bracket, 
and is isomorphic to $\Map(T[1]\Sigma_n,\calM)$ as a P-manifold.
If unctionals on $\Map(T[1]\Sigma_n,T^*[n+1]\calM)$
are restricted to this twisted Lagrangian submanifold,
Poisson brackets have anomalous terms.

A Lagrangian submanifold in supergeometry is reinterpreted as 
a generalized Dirac structure in normal geometry. 
Relations of the Dirac structure and current algebras were discussed in 
\cite{Alekseev:2004np}.

\subsection{Physical current algebras from supergeometric current algebras}
\noindent
Physical current algebras are obtained from restriction of BFV current algebras in the previous section.

For it, we introduce second degree, the \textsl{form degree} $\deg f$
for a function $f$ on $T[1]\Sigma_n$.
It is, by definition, zero on $\Sigma_n$ and
one on the $T[1]$ direction.
If we take a local coordinate $(\sigma^{\mu}, \theta^{\mu})$
on $T[1]\Sigma_n$,
the form degree of $(\sigma^{\mu}, \theta^{\mu})$ is $(0, 1)$.
For example, $f = \frac{1}{k!} \theta^{\mu_1} \cdots \theta^{\mu_k}
f_{\mu_1 \cdots \mu_k}(\sigma)$ is the form degree $k$.
For a general function $f$ on
$\Map(T[1]\Sigma_n, T^*[n+1]\calM$
with definite degree, degree minus form degree,
$\gh \ f = |f| - \deg f$,
is called a \textsl{ghost number}.

We expand a superfield of degree $i$ by $\theta$ as
\begin{eqnarray}
\bPhi_{(i)}(\sigma, \theta)
= \sum_{k, \mu(k)} \bPhi_{(i)}^{(k)}(\sigma, \theta)
= \sum_{k, \mu(k)} \frac{1}{k!}
\theta^{\mu_1} \cdots \theta^{\mu_k}
\Phi^{(k)}_{(i)\mu_1 \cdots \mu_k}(\sigma).
\nonumber
\end{eqnarray}
Then, the term $\frac{1}{k!}
\theta^{\mu_1} \cdots \theta^{\mu_k}
\Phi^{(k)}_{(i)\mu_1 \cdots \mu_k}(\sigma)$
has the form degree $k$ and the ghost number $i-k$.
A classical physical field is the coefficient of the ghost number 
zero term of $\bPhi_{(i)}$,
$\Phi^{(i)}_{(i)\mu_1\cdots \mu_k}(\sigma)
= \bPhi_{(i)}{}_{\mu_1\cdots \mu_k}(\sigma)|_{\gh \Phi =0}$.
We denote 
the restriction of a superfield function $f$ 
to the corresponding classical field term
by $f_{cl}$, i.e.,
\begin{eqnarray}
\bPhi_{(i)}{}_{cl}(\sigma, \theta)
:= \frac{1}{i!}
\theta^{\mu_1} \cdots \theta^{\mu_i}
\Phi^{(i)}_{(i)\mu_1 \cdots \mu_i}(\sigma),
\label{classicalcomp}
\end{eqnarray}

Recall Darboux coordinate superfields
$(\bp_{(i)}(\sigma, \theta), \bq_{(n+1-i)}(\sigma, \theta))$
on $\Map(T[1]\Sigma_n, T^*[n+1]\calM)$,
where $\bp_{(i)}$ is an element of
$\Map(T[1]\Sigma_n, \calM)$
of degree $i$ and
$\bq_{(n+1-i)}$ is an element of
$\Map(T[1]\Sigma_n, T^*[n+1])$
of degree $n+1-i$,
$(i=0, 1, \cdots, n)$.
\if0
The graded symplectic structure $\bomegab$ is written as
\begin{eqnarray}
\bomegab
&=& \mu_* \ev^* \omegab
= \int_{T[1]\Sigma_n}
\mu
\
\sum_{i=0}^n
(-1)^{i(n-i)}
\delta \bp_{(i)} \wedge \delta \bq_{(n+1-i)}.
\end{eqnarray}
\fi
Since we suppose that $S$ is horizontal,
a Poisson bracket $\sbv{-}{-}_{PB}$ as the projected derived bracket
from the derived bracket constructed from $\bomegab$ and $S$.
By restricting $\sbv{-}{-}_{PB}$
to the ghost number zero part of superfields $\bp_{(i)}^{(i)}$,
the Poisson brackets of physical fields are obtained.
\if0
We denote $\bomegas{}_{cl}$ by
\begin{eqnarray}
\bomega_s{}_{cl}
= \int_{T[1]\Sigma_n}
\mu
\
\sum_{i, j=0}^n
\omegas^{-1}{}_{I_i J_j} (\bp{}_{cl})
\delta \bp^{(i)I_i}_{(i)} \wedge \delta \bp^{(j)J_j}_{(j)}.
\end{eqnarray}
the usual Poisson bracket $\sbv{-}{-}_{PB}$ is defined
from $\bomega_s{}_{cl}$.
\fi
\if0
For the pullback of a current function of degree $i$,
$\calJ_{(i)}(\bp_{(i)}, \bq_{(n+1-i)}) = \mu_* \epsilon \, \ev J_{(i)}$,
the twisted pullback replaces $\bq$ by
$\bbd \bp$ as 
$\bJ_{(i)}(\bp_{(i)}, (-1)^{i(n-i)}
\omegas^{-1}{}_{I_{n-i} J_j}(\bp) \bbd \bp_{(j)}^{J_j})$.
We take the test function as
\begin{eqnarray}
\epsilon_{(n-i)}(\sigma ,\theta) = \frac{1}{(n-i)!}
\theta^{\mu_1}\cdots\theta^{\mu_{n-i}}
\cepsilon_{(n-i)\mu_1\cdots \mu_{n-i}}(\sigma).
\end{eqnarray}
since the test function $\epsilon(\sigma ,\theta)$ is of degree $n-i$.
The integrand in a supergeometric current
is expanded by $\theta$ as 
$\sum_{k=0}^n
\frac{1}{k!} \theta^{\mu_1} \cdots \theta^{\mu_k}
J^{(k)}_{(i), \mu_1\cdots \mu_k}(\bp, \bbd \bp)(\sigma)$.
Therefore we obtain
\begin{eqnarray}
\bJ_{(i)}(\bp_{(i)}, (-1)^{i(n-i)}
\omegas^{-1}{}_{I_{n-i} J_j}(\bp) \bbd \bp_{(j)}^{J_j})
&=& \int_{T[1]\Sigma_n} \! \mu \, \epsilon_{(n-i)}(\sigma ,\theta) 
\sum_{k=0}^n
\frac{1}{k!} \theta^{\mu_1} \cdots \theta^{\mu_k}
J^{(k)}_{(i), \mu_1\cdots \mu_k}(\bp, \bbd \bp)(\sigma)
\nonumber \\
&=& \int_{\Sigma_n} \! 
\cepsilon_{(n-i)}^{(n-i)} (\sigma)
\wedge J^{(i)}_{(i)}(\sigma) ,
\nonumber
\end{eqnarray}
where
\begin{eqnarray}
\cepsilon_{(n-i)}^{(n-i)} (\sigma)
&=& \frac{1}{(n-i)!}
d \sigma^{\mu_1} \wedge \cdots \wedge d \sigma^{\mu_{n-i}}
\cepsilon_{(n-i)\mu_1\cdots \mu_{n-i}}(\sigma), 
\nonumber \\
J^{(i)}_{(i)}(\sigma) 
&=& 
\frac{1}{i!} d \sigma^{\mu_1} \wedge \cdots \wedge d \sigma^{\mu_i}
J^{(i)}_{(i), \mu_1\cdots \mu_i}(\bp, \bbd \bp)(\sigma).
\nonumber
\end{eqnarray}
Here we denote $\theta^{\mu}$ by $d \sigma^{\mu}$.
\fi
\if0
Let $\bJ_{(i)}$ be degree $0$ and ghost number $0$ part $\bJ$. 
However note that $\bJ_{(i)}$ is not 
a classical current yet.
$\bJ_{(i)}$ contains the ghost fields.
Its restriction to ghost number zero components of superfields
$\cJ^{(i)}_{(i), \mu_1\cdots \mu_k}(\sigma)
= J^{(i)}_{(i), \mu_1\cdots \mu_k}(\bp{}_{cl})(\sigma)$
is a \textsl{classical physical current}.
We claim that
the $\{ \cJ^{(k)}_{(i), \mu_1\cdots \mu_k}(\sigma) \}$ part of
a supergeometric current 
is a classical physical current
and the classical current algebra is obtained by
the reduction of the formula (\ref{supercurrentcommutation}) on them.
\fi

The restriction of $\bJ$ to ghost number zero components of superfields 
\eqref{classicalcomp} is denoted by
$\bJ{}_{cl} 
= \int_{\Sigma_n} \! 
\cepsilon_{(n-i)}^{(n-i)} (\sigma)
\wedge \cJ^{(i)}_{(i)}(\sigma)$.
Here test functions $\cepsilon_{(n-i)}^{(n-i)} (\sigma)$ are also the ghost number zero part of super test functions $\epsilon$.

\begin{theorem}
The classical current algebra is the Poisson algebra of
the ghost number zero components of superfields of
the supergeometric current algebra. The concrete formula is
\begin{eqnarray}
\{\bJ_{J_1}{}_{cl}(\epsilon_1),\bJ_{J_2}{}_{cl}(\epsilon_2)\}_{P.B}
&=& \left(-\bJ_{[J_1,J_2]_D}
(\epsilon_1 \epsilon_2)
\right.
\nonumber \\
&&
\left.
- e^{\mathrm{ad}(\bvarthetas)} \mu_*
(D \epsilon_1) \epsilon_2
\ev^* \bsbv{J_1}{J_2}
\right)|_{\Map(T[1]\Sigma_n,\calM)}
{}_{cl},
\label{classicalcurrentcommutation}
\end{eqnarray}
where ${}_{cl}$ is components of classical parts of superfields.
\end{theorem}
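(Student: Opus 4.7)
The plan is to obtain the classical current algebra identity as the ghost number zero restriction of the supergeometric identity established in Theorem \ref{supercurrentalgebratheorem}. The right-hand side of (\ref{classicalcurrentcommutation}) is already displayed as the restriction $|_{cl}$ of the right-hand side of (\ref{supercurrentcommutation}), so the substantive content is to identify the classical Poisson bracket of the restricted currents on the left with the classical restriction of the super Poisson bracket.

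The key step is to establish the following restriction lemma: for any two functions $f, g$ on $\Map(T[1]\Sigma_n, \calM)$,
\begin{equation*}
\{f|_{cl}, g|_{cl}\}_{P.B.} = \{f, g\}_{P.B.}|_{cl},
\end{equation*}
where the left bracket is induced by the classical symplectic form $\bomega_s|_{cl}$ and the right bracket is induced by the full super symplectic form $\bomega_s$. To verify this, I would expand the Darboux superfields $\bp_{(i)}, \bq_{(n+1-i)}$ in the odd variable $\theta$ and observe that $\bomega_s$ decomposes as a block-diagonal sum of component pairings: a classical component pairs only with the classical component of its conjugate, while each nonzero ghost number component pairs with a uniquely determined antifield component of complementary ghost number. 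Since $f|_{cl}$ depends only on ghost zero components, its Hamiltonian vector field has no ghost/antifield directions, and only the classical-classical block—which is exactly $\bomega_s|_{cl}$—contributes when paired with $g|_{cl}$.

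With the lemma in hand, apply it to $f = \bJ_{J_1}(\epsilon_1)$ and $g = \bJ_{J_2}(\epsilon_2)$. Theorem \ref{supercurrentalgebratheorem} evaluates the super bracket $\{\bJ_{J_1}(\epsilon_1), \bJ_{J_2}(\epsilon_2)\}_{P.B.}$ as the right-hand side of (\ref{supercurrentcommutation}); combining with the lemma and restricting to $|_{cl}$ on both sides yields exactly (\ref{classicalcurrentcommutation}).

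The main obstacle is verifying that the block decomposition of $\bomega_s$ survives the twisted pullback used to define the currents. Twisting by $S_s$ substitutes cotangent superfields $\bq_{(n+1-i)}$ by expressions of the form $(-1)^{i(n-i)} \omega_s^{-1}{}_{I_{n-i} J_j}(\bp)\bbd \bp_{(j)}^{J_j}$, and one must confirm that this substitution does not mix the classical and ghost sectors after $|_{cl}$. Since $\bbd$ preserves ghost number and $S_s$ is the kinetic term built from $\vartheta_s$ alone, the resulting expressions remain compatible with the block decomposition. Making this explicit in Darboux coordinates on a monomial basis—following the same component expansion used to exhibit $\bomega_s|_{cl}$ in the paragraph preceding the theorem—closes the argument.
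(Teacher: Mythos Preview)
The paper states this theorem without proof, so there is no authorial argument to benchmark against; your proposal must stand on its own.

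Your restriction lemma is the right target, but the argument you give for it establishes only the easy half. Writing $f=f_0+f_1$ with $f_0=f|_{cl}$ and $f_1$ vanishing on the classical locus, the block structure of $\bomega_s$ indeed yields $\{f_0,g_0\}_{\text{super}}=\{f_0,g_0\}_{\text{classical}}$, which is what your Hamiltonian-vector-field sentence proves. What it does \emph{not} address is why the remaining pieces $\{f_0,g_1\}$, $\{f_1,g_0\}$, and especially $\{f_1,g_1\}$ vanish under $|_{cl}$. In general they do not: a single ghost--antifield contraction in $\{f_1,g_1\}$ can produce a purely classical term. The missing ingredient is that $\bJ_{J_i}$ has ghost number zero, so every monomial in it is either purely classical or contains at least two nonclassical components (one of positive and one of negative ghost number). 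A single Poisson contraction removes at most one nonclassical factor from each side, so every monomial in $\{f_1,g_1\}$ still carries at least two nonclassical factors and dies under $|_{cl}$; the cross terms are handled the same way. This is the structural fact you need, and it has nothing to do with the twisted pullback---your ``main obstacle'' is a red herring, since $S_s$ acts on the currents, not on $\bomega_s$, and the block decomposition of the symplectic form is untouched.
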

In the next subsection, we use this formula to obtain
concrete current algebras.

\subsection{Examples}
In this section, examples of supergeometric current algebras are listed.

\begin{example}[Alekseev-Strobl current algebras]\label{AScurrentalgebra}
\cite{Alekseev:2004np}
Let us consider the loop phase space, $\Map(S^1,T^*M)$.
The loop phase space is extended to the super loop space,
$\Map(T[1]S^1,T^*[1]M)$.
Our current algebra is defined on this super loop phase space.

Start at the target supermanifold $T^*[1]M$.
For a supermanifold $\calM = T^*[1]M$, we consider
a cotangent bundle $T^*[2]\calM = T^*[2]T^*[1]M$.
Following Example \ref{examplepoisson}, we consider 
QP-structure on $T^*[2]T^*[1]M$ with 
the canonical symplectic structure $\omega_{T^*[2]T^*[1]M}$.
Here we consider the horizontal homological function 
$\Theta = \xi_I q^I + \frac{1}{3!} H_{IJK}(x) q^I q^J q^K$.
Then, \\
$(T^*[2]T^*[1]M, \omega_{T^*[2]T^*[1]M}, \Theta)$ is a QP-manifold.

By using the AKSZ construction,
a QP-structure on the mapping space $\Map(T[1]S^1,T^*[2]T^*[1]M)$ is induced
from the QP-structure on $T^*[2]T^*[1]M$.
Superfields are local coordinates in the mapping space. 
They are
$\bx:T[1]S^1 \rightarrow M$,
$\bp \in \Gamma(T[1]S^1 \otimes \bx^*(T^*[1]M))$,
$\bq \in \Gamma(T[1]S^1 \otimes \bx^*(T^*[2]T^*[1]M))$
and
$\bbxi \in \Gamma(T[1]S^1 \otimes \bx^*(T^*[2]M))$.
Expansions with respect to $\theta$ are
\begin{eqnarray}
\bx^i &=& x^i(\sigma) + \theta x^i_{(1)}(\sigma),
\\
\bp_i &=& p_{(0)i} (\sigma) + \theta p_i(\sigma),
\\
\bq^i &=& q^i_{(0)}(\sigma) + \theta q^i_{(1)}(\sigma),
\\
\bbxi_i &=& \xi_{(0)i} (\sigma) + \theta \xi_{(1)i}(\sigma).
\end{eqnarray}
The symplectic structure on 
$\Map(T[1]S^1,T^*[2]T^*[1]M)$
is
\begin{eqnarray}
\bomega_{T^*[2]T^*[1]M}
= 
\int_{T[1]S^1}
\rd \sigma \rd \theta
\,
\left(\delta \bx^{I} \wedge \delta \bbxi_{I}
+ \delta \bp_{I} \wedge \delta \bq^{I}
\right).
\end{eqnarray}
The homological function is
\begin{eqnarray}
S = \int_{T[1]S^1}
\rd \sigma \rd \theta
\,
\left( \bbxi_I \bq^I + \frac{1}{3!} H_{IJK}(\bx) \bq^I \bq^J \bq^K \right).
\end{eqnarray}
where boldface letters are superfields corresponding to local coordinates
on $T^*[2]T^*[1]M$.
The Berezin measure on $T[1]S^1$ is denoted by $\mu = \rd \sigma \rd \theta$.

The projected derived bracket $\ssbv{-}{-}$ is a Poisson bracket
of degree zero $\sbv{-}{-}_{PB}$ on the super phase space 
$\Map(T[1]S^1,T^*[1]M)$. The Poisson bracket is
\begin{eqnarray}
\sbv{\bx^{I}(\sigma, \theta)}{\bp_{J}(\sigma^{\prime}, \theta^{\prime})}_{PB} 
= \delta^I_J \delta(\sigma-\sigma^{\prime}) \delta(\theta-\theta^{\prime}),
\end{eqnarray}
and other canonical Poisson brackets are zero.
Then, the following symplectic structure on $\Map(T[1]S^1, T^*[1]M)$ 
is induced,
\begin{eqnarray}
\bomega_{T^*[1]M}
= 
\int_{T[1]S^1}
\rd \sigma \rd \theta
\,
\delta \bx^{I} \wedge \delta \bp_{I}.
\end{eqnarray}
The canonical $1$-from $\bvarthetas$ is
\begin{eqnarray}
\bvartheta_{T^*[1]M}
&=& \iota_{D} \mu_* \ev^* \vartheta_{T^*[1]M}
= \int_{T[1]S^1}
\rd \sigma \rd \theta
\
\bp_I \bbd \bx^I.
\end{eqnarray}

Next, we consider the space of BFV currents,
i.e., the space of the restriction of twisted functionals
\begin{eqnarray}
\CAone(T^*[2]T^*[1]M)) = \{\bJ 
| J \in \Cn(T^*[2]T^*[1]M), \epsilon \in C^{\infty}(T[1]S^1)
\},
\end{eqnarray}
where $\bJ
= e^{\mathrm{ad}(\bvarthetas)} \calJ|_{\Map(T[1]S^1, T^*[1]M)}
= e^{\mathrm{ad}(\bvarthetas)} \mu_* \epsilon \, \ev^* J |_{\Map(T[1]S^1, T^*[1]M)}$.
\\
%
%
%
Elements of $C_{\leq 1}(T^*[2]T^*[1]M)$ are functions of degree $0$
and of degree $1$. 
We can write them as
\begin{eqnarray}
&& J_{(0)(f)} = f(x),
\nonumber \\
&& J_{(1)(u,a)} = a_I(x) q^I + u^I(x) p_I,
\end{eqnarray}
where $f, a, u$ are local functions of $x$.
Corresponding currents
are
\begin{eqnarray}
&& \bJ_{(0)(f)}
= \int_{T[1]S^1} 
\rd \sigma \rd \theta \, 
\epsilon_{(1)} f(\bx),
\nonumber \\
&&
\bJ_{(1)(u,a)}
= \int_{T[1]S^1} \rd \sigma \rd \theta \, 
\epsilon_{(0)} (a_I(\bx) \bbd \bx^I + u^I(\bx) \bp_I),
\end{eqnarray}
where $\epsilon_{(i)}$ is a test function of degree $i$.
Note that $\bq^I$ is replaced to $\bbd \bx^I$
by twist $e^{\mathrm{ad}(\bvarthetas)}$.
Poisson brackets of these currents are computed by
the formula \ref{supercurrentcommutation02}
 as follows:
\beqa
    \{\bJ_{(0)(f)}(\epsilon),
\bJ_{(0)(g)}^{\prime}(\epsilon^{\prime})
\}_{PB}
&=&
\left( - e^{\mathrm{ad}(\bvarthetas)} \mu_* \epsilon \epsilon^{\prime}
\ev^* \bsbv{\bsbv{J_{(0)(f)}}{\Theta}}{J_{(0)(g)}^{\prime}}\right.
\nonumber \\ &&
\left. -e^{\mathrm{ad}(\bvarthetas)} \mu_*
(D \epsilon) \epsilon^{\prime}
\ev^* \bsbv{J_{(0)(f)}}{J_{(0)(g)}}
\right)|_{\Map(T[1]S^1, T^*[1]M)}
=        0,\nom\\
     \{\bJ_{(1)(u,a)}(\epsilon),
\bJ_{(0)(g)}^{\prime}(\epsilon^{\prime})\}_{PB}
&=&
\left( - e^{\mathrm{ad}(\bvarthetas)} \mu_* \epsilon \epsilon^{\prime}
\ev^* \bsbv{\bsbv{J_{(1)(u,a)}}{\Theta}}{J_{(0)(g)}^{\prime}}\right.
\nonumber \\ &&
\left. -e^{\mathrm{ad}(\bvarthetas)} \mu_*
(D \epsilon) \epsilon^{\prime}
\ev^* \bsbv{J_{(1)(u,a)}}{J_{(0)(g)}}
\right)|_{\Map(T[1]S^1, T^*[1]M)}
\nom\\
&=& -u^{I}\frac{\del \bJ_{(0)(g)}^{\prime}}
{\del \bx^I}(\epsilon \epsilon^{\prime})
,\nom\\
    \{\bJ_{(1)(u,a)}(\epsilon),
\bJ_{(1)(v,b)}(\epsilon^{\prime})\}_{PB}
    &=&
\left( - e^{\mathrm{ad}(\bvarthetas)} \mu_* \epsilon \epsilon^{\prime}
\ev^* \bsbv{\bsbv{J_{(1)(u,a)}}{\Theta}}{J_{(1)(v,b)}^{\prime}}
\right.
\nonumber \\ &&
\left. -e^{\mathrm{ad}(\bvarthetas)} \mu_*
(D \epsilon) \epsilon^{\prime}
\ev^* \bsbv{J_{(1)(u,a)}}{J_{(1)(v,b)}}
\right)|_{\Map(T[1]S^1, T^*[1]M)}
\nonumber \\
    &=&
-\bJ_{(1)(\courant{(u,a)}{(v,b)})}(\epsilon \epsilon^{\prime})
\nonumber \\
&&
- \int_{T[1]S^1} 
\!\!\!\!\!\!\!\!
\rd \sigma \rd \theta \,
\bbd \epsilon_{(0)} \epsilon^{\prime}_{(0)}
\bracket{(a_I(\bx), u^I(\bx))}{(b_I(\bx), v^{I}(\bx))},
\label{2dalekseevstrtoblsupercommutation}
\eeqa
where
$\bJ_{(0)(g)}^{\prime} = \int_{T[1]S^1} \rd \sigma \rd \theta \, \epsilon_{(1)} g(\bx)$
and
$\bJ_{(1)(v,b)}^{\prime} =
\int_{T[1]S^1} \rd \sigma \rd \theta \, \epsilon_{(0)} (b_I(\bx) \bbd \bx^I + v^I(\bx) \bp_I)$.
$\courant{(u,a)}{(v,b)}$ is the Dorfman bracket on
$TM \oplus T^*M$ explained in Example \ref{examplepoisson}.
The classical current algebra is the ghost number zero
components of superfields in
the equations (\ref{2dalekseevstrtoblsupercommutation}):
\beqa
    \{\bJ_{(0)(f)}{}_{cl}(\epsilon),
\bJ_{(0)(g)}^{\prime}{}_{cl}(\epsilon^{\prime})
\}_{PB}
&=&
0,\nom\\
     \{\bJ_{(1)(u,a)}{}_{cl}(\epsilon),
\bJ_{(0)(g)}^{\prime}{}_{cl}(\epsilon^{\prime})\}_{PB}
&=& -u^{I}(\sigma)\frac{\del \bJ_{(0)(g)}^{\prime}{}_{cl}}
{\del x^I}(\epsilon \epsilon^{\prime})
,\nom\\
    \{\bJ_{(1)(u,a)}{}_{cl}(\epsilon),
\bJ_{(1)(v,b)}{}_{cl}(\epsilon^{\prime})\}_{PB}
    &=&
- \bJ_{(1)(\courant{(u,a)}{(v,b)})}{}_{cl}(\epsilon \epsilon^{\prime})
\label{2dalekseevstrtoblcommutation}
\\
&&
- \int_{S^1} \rd \sigma (\partial_{\sigma} \epsilon_{(0)}
\epsilon^{\prime}_{(0)}
\bracket{(a_I(x(\sigma)), u^I(x(\sigma)))}{(b_I(x(\sigma)), v^{I}(x(\sigma)))},
\nonumber
\eeqa
where
$\bJ_{(0)(f)}{}_{cl}
= \int_{S^1} \rd \sigma \cepsilon_{(1)}(\sigma) f(x(\sigma))$,
$\epsilon_{(1)}(\sigma, \theta) = \theta \cepsilon_{(1)}(\sigma)$,
and
\\
$\bJ_{(1)(u,a)}{}_{cl}
= \int_{S^1} \rd\sigma \epsilon_{(0)}(\sigma)
(a_I(x(\sigma)) \partial_{\sigma} x^I(\sigma) + u^I(x(\sigma)) p_I(\sigma))$.
%
This coincides with the generalized current algebra described
in \cite{Alekseev:2004np}. 
\end{example}

\begin{example}[Kac-Moody Algebra]
As a special case of the previous example,
take $M$ to be a Lie group $G$.
Then the extended phase space is $\Map(T[1]S^1,T^*[1]G)$.
We choose a QP-manifold $(T^*[2]T^*[1]G, \omega_{T^*[2]T^*[1]G}, \Theta)$
and the graded symplectic form and the homological function $\Theta$ 
are the same forms as in Example \ref{AScurrentalgebra}.
Here $H_{IJK}$ is a structure constant of the corresponding Lie algebra.

Then the BFV currents give the supergeometric formalism
of the Kac-Moody current algebra.
The classical part $\cJ$ coincides with
currents of the Hamiltonian formalism of the WZW model
\cite{Giusto:2001sn}.
\end{example}

\begin{example}[Current algebras in two dimensions]\cite{Ikeda:2011ax}
We can consider higher dimensional generalizations of 
Alekseev-Strobl current algebras in Example \ref{AScurrentalgebra}.

Let $X = \Sigma_2 \times \bR$ be a three dimensional spacetime.
The two-dimensional manifold $\Sigma_2$ is the space direction.
$\sigma^{\mu}, (\mu =1,2)$ is a local coordinate on $\Sigma_2$.
Let us consider the space,
$\Map(\Sigma_2, T^*M) \oplus \Map(T\Sigma_2, T^*E)$,
where $E$ is a vector bundle over a manifold $M$.
This is the phase space for physical models in $3$ dimensions.
Local coordinates are $(x^I(\sigma), p_{I}(\sigma), q^A(\sigma))$,
where $I$ is the index in $M$ and $A$ is the index of the fiber of $E$.
Here $(x^I(\sigma), p_{I}(\sigma))$ are local coordinates 
of $\Map(\Sigma_2, T^*M)$, and a $1$-form taking a value in $X^*E$, 
$q^A(\sigma) = \rd \sigma^{\mu} q_{\mu}^A(\sigma))$, is regarded as 
an element of $\Map(T\Sigma_2, T^*E)$.
Let us consider a super phase space 
$\Map(T[1]\Sigma_2, T^*[2]E[1])$ as an extension of the above classical 
phase space.

We introduce the cotangent bundle $T^*[3]\calM = T^*[3]T^*[2]E[1]$
as the target space of an auxiliary phase space.
$T^*[3]\calM$ has a canonical graded symplectic form 
$\omega_{T^*[3]\calM}$ of degree $3$.
Take local coordinates $(x^I, q^A, p_I)$
of degree $(0,1,2)$ on $T^*[2]E[1]$,
and conjugate Darboux coordinates $(\xi_I, \eta^A, \chi^I)$ of degree
$(3,2,1)$ on $T^*[3]$.
A graded symplectic structure on $T^*[3]\calM$ is
\beq
\omega_{T^*[3]\calM} = \delta x^I\wedge\delta \xi_I
    -\delta q^A\wedge\delta (k_{AB}\eta^{B})
+\delta p_I\wedge\delta \chi^I,
\eeq
where $k_{AB}$ is a fiber metric on $E$\footnote{We can take $k_{AB} = \delta_{AB}$.}.

We consider a Q-structure function,
\beq
\Theta=  \chi^I \xi_I
          +\frac{1}{2}k_{AB}\eta^A\eta^B+
\frac{1}{4!}
H_{IJKL}(x) \chi^I \chi^J \chi^K \chi^L.
\eeq
$\Theta$ is a compatible homological function if $H$ is a closed $4$-form.

A graded symplectic form
$\bomega_{T^*[3]\calM}$ of degree $1$ is induced as
\beq
    \bomega_{T^*[3]\calM}
=\mu_* \ev^* \omegab
=\int_{T[1] \Sigma_2}
\rd \sigma^2 \rd^2 \theta \,
(\delta \bx^I\wedge\delta\vxi_I
    -\delta\bq^A\wedge\delta (k_{AB} \veta^{B})
+\delta \bp_I\wedge\delta \vchi^I),
\eeq
Boldface superfields are ones for corresponding local coordinates 
in $T^*[3]\calM$.
\if0
$\bq^A \in \Gamma(T^*[1]\Sigma_2
\otimes \bx^*(E[1]))$,
$\bp_I \in \Gamma(T^*[1]\Sigma_2
\otimes \bx^*(T^*[2]M))$
are superfields of degree $1$ and $2$.
$\vxi_I(\sigma,\theta) \in \Gamma(T^*[1]\Sigma_2 \otimes \bx^*(T^*[3]M))$,
$\veta^A(\sigma,\theta)
\in \Gamma(T^*[1]\Sigma_2\otimes \bx^*(T^*[3]E[1]))$,
 and
$\vchi^I(\sigma,\theta)
\in \Gamma(T^*[1]\Sigma_2 \otimes \bx^*(T^*[3]T^*[2]M))$.
\fi

In order to construct currents, 
consider the space of functions of degree equal to, or less than two, 
$C_2(T^*[3]T^*[2]E[1])
= \{f \in C^{\infty}(T^*[3]T^*[2]E[1]) | |f| \leq 2 \}$.
Elements of degree $0$, $1$ and $2$ are written by local coordinates as
\begin{eqnarray}
J_{(0)(f)} &=& f(x),
\nonumber \\
J_{(1)(a, u)} &=&
a_{I}(x) \chi^I + u_{A} (x) q^A,
\nonumber \\
J_{(2)(G,K,F,B,E)} &=&
    G^I(x) p_{I} + K_A(x) \eta^A + \frac{1}{2}F_{AB}(x) q^A q^B
+\frac{1}{2}B_{IJ}(x)\chi^I \chi^J
\nonumber \\&&
+ E_{AI}(x) \chi^I q^A.
\end{eqnarray}
Here all coefficients are some functions of $x$.

Next we take the space of supergeometric currents, \\
$\calC\calA_2(T^*[2]E[1])) = \{\bJ 
| J \in C_2(T^*[3]T^*[2]E[1])
\}$,
where $\bJ
= e^{\mathrm{ad}(\bvarthetas)} \calJ|_{\Map(T[1]\Sigma_2, T^*[2]E[1])}
= e^{\mathrm{ad}(\bvarthetas)} \mu_* \epsilon \, \ev^* J 
|_{\Map(T[1]\Sigma_2, T^*[2]E[1])}$
%
and the canonical $1$-form is
\begin{eqnarray}
\bvartheta_{T^*[2]E[1]}
&=& \iota_{D} \mu_* \ev^* \vartheta_{T^*[2]E[1]}
\nonumber \\
&=& \int_{T[1]\Sigma_2}
\rd \sigma^2 \rd^2 \theta \,
\left(
- \bp_I \bbd \bx^I
+ \frac{1}{2} k_{AB} \bq^A \bbd \bq^B
\right).
\end{eqnarray}
General forms of BFV currents of degree $0$, $1$ and $2$ are
\begin{eqnarray}
\bJ_{(0)(f)}
&=& \int_{T[1]\Sigma_2} \rd \sigma^2 \rd^2 \theta \,
 \epsilon_{(2)} f(\bx),
\nonumber \\
\bJ_{(1)(a,u)}
&=& \int_{T[1]\Sigma_2} \rd \sigma^2 \rd^2 \theta \,
 \epsilon_{(1)} (a_{I}(\bx)\bbd \bx^I
+u_{A} (\bx)\bq^A),
\nonumber \\
\bJ_{(2)(G,K,F,B,E)}({\sigma,\theta})
&=& \int_{T[1]\Sigma_2} \rd \sigma^2 \rd^2 \theta \,
\epsilon_{(0)} 
\left(G^I(\bx) \bp_{I}+ K_A(\bx) \bbd \bq^A
    + \frac{1}{2}F_{AB}(\bx) \bq^A \bq^B
\right.
\nonumber \\
&&
\left.
    + \frac{1}{2}B_{IJ}(\bx) \bbd \bx^I \bbd \bx^J
    +E_{AI}(\bx)\bbd \bx^I \bq^A\right).
\end{eqnarray}
The general formula 
\eqref{supercurrentcommutation02}
produce the following supergeometric current algebra:
\beqa
&& \sbv{\bJ_{(0)(f)}(\epsilon)}{\bJ_{(0)(f^\prime)}(\epsilon^{\prime})}_{PB}
=0,
\nom\\
&&
\sbv{\bJ_{(1)(u,a)}(\epsilon)}
{\bJ_{(0)(f^\prime)}(\epsilon^{\prime})}_{PB}
=0,
\nom\\
&& \sbv{\bJ_{(2)(G,K,F,H,E)}(\epsilon)}
{\bJ_{(0)(f^\prime)}(\epsilon^{\prime})}_{PB}
=
-G^{I}\frac{\del \bJ_{(0)(f^\prime)}}{\del \bx^I}(\epsilon \epsilon^{\prime}),
\nom\\
&& \sbv{\bJ_{(1)(u,a)}(\epsilon)}
{\bJ_{(1)(u^\prime,a^\prime)}(\epsilon^{\prime})}_{PB}
= - \int_{T[1]\Sigma_2}
\rd \sigma^2 \rd^2 \theta \, 
\epsilon_{(1)} \epsilon^{\prime}_{(1)} (k^{AB}u_A u_B^\prime),
\nom\\
&&
\sbv{\bJ_{(2)(G,K,F,B,E)}(\epsilon)}
{\bJ_{(1)(u^\prime,a^\prime)}(\epsilon^{\prime})}_{PB}
\nonumber \\
&& = -\bJ_{(1)(\bar u,\bar \alpha)}(\epsilon \epsilon^{\prime})
- \int_{T[1]\Sigma_2} \rd \sigma^2 \rd^2 \theta \,
(\bbd \epsilon_{(0)}) \epsilon^{\prime}_{(1)}
(G^{I}\alpha^\prime_I - k^{AB}K_A u_B^\prime),
\nom\\
&&
\sbv{\bJ_{(2)(G,K,F,B,E)}(\epsilon)}
{\bJ_{(2)(G^\prime,K^\prime,F^\prime,B^\prime,E^\prime)} 
(\epsilon^{\prime})}_{PB}
    =-\bJ_{(2)(\bar G, \bar K, \bar F,\bar B,\bar E)}
(\epsilon \epsilon^{\prime})
\nom\\&&
\qquad
- \int_{T[1]\Sigma_2} \rd \sigma^2 \rd^2 \theta \,
(\bbd \epsilon_{(0)}) \epsilon^{\prime}_{(0)}
\left[(G^JB^\prime_{JI}+G^{\prime J}B_{JI}
             +k^{AB}(K_A E_{BI}^\prime+E_{AI}K_{B}^\prime))\bbd \bx^I
\right.
\nom\\&&
\qquad
\left.
+(G^IE^\prime_{AI}+G^{\prime I}E_{AI}
                +k^{BC}(K_B F_{AC}^\prime+F_{AC}K_{B}^\prime))\bq^A \right].
\eeqa
Here
\beqa
    &&\bar \alpha=({\it i}_{G} d+d{\it i}_{G})\alpha^\prime+\langle E-dK,u^\prime\rangle,\quad
    \bar u={\it i}_{G} du^\prime +\langle F, u^\prime\rangle, \nom
\\
    &&\bar G=[G,G^\prime], \nom\\
    &&\bar K={\it i}_{G}d K^\prime-{\it i}_{G^\prime}d K+{\it i}_{G^\prime} E+\langle F, K^\prime\rangle, \nom\\
    &&\bar F={\it i}_{G}d F^\prime-{\it i}_{G^\prime}d F+\langle F, F^\prime\rangle, \nom\\
    &&\bar B=(d{\it i}_{G}+{\it i}_{G}d)B^\prime-{\it i}_{G^\prime}d B+\langle E, E^\prime\rangle
    +\langle K^\prime, dE\rangle-\langle dK,E^\prime\rangle+{\it i}_{G^\prime}{\it i}_{G} H, \nom\\
    &&\bar E=(d{\it i}_{G}+{\it i}_{G}d)E^\prime-{\it i}_{G^\prime}d E+\langle E, F^\prime\rangle
    -\langle E^\prime, F\rangle+\langle dF, K^\prime\rangle-\langle dK, F^\prime\label{3bE}
\rangle,
\label{3dalgebra}
\eeqa
where all the terms are evaluated by $\sigma^{\prime}$.
The bracket $[ -,- ]$ is a Lie bracket on $TM$, $i_G$ is an interior product
with respect to a vector field $G$ and
$\langle -,- \rangle$ is a bilinear form on the fiber of $E$
with respect to the metric $k^{AB}$.
\if0
The condition vanishing anomalous terms,
$G^{I}\alpha^\prime_I - k^{AB}K_A u_B^\prime=0$,
$G^JB^\prime_{JI}+G^{\prime J}B_{JI}
             +k^{AB}(K_A E_{BI}^\prime+E_{AI}K_{B}^\prime)=0$
and
$G^IE^\prime_{AI}+G^{\prime I}E_{AI}
                +k^{BC}(K_B F_{AC}^\prime+F_{AC}K_{B}^\prime)=0$
is equivalent that
$J_{(i)}$'s are commutative.
\fi

The corresponding classical current algebra is the ghost number
zero components of superfields:
\beqa
&& \sbv{\bJ_{(0)(f)}{}_{cl}(\epsilon)}
{\bJ_{(0)(f^\prime)}{}_{cl}(\epsilon^{\prime})}_{PB}
=0,
\nom\\
&&
\sbv{\bJ_{(1)(u,a)}{}_{cl}(\epsilon)}
{\bJ_{(0)(f^\prime)}{}_{cl}(\epsilon^{\prime})}_{PB}
=0,
\nom\\
&& \sbv{\bJ_{(2)(G,K,F,H,E)}{}_{cl}(\epsilon)}
{\bJ_{(0)(f^\prime)}{}_{cl}(\epsilon^{\prime})}_{PB}
=
-G^{I}\frac{\del \bJ_{(0)(f^\prime)}{}_{cl}}{\del x^I}
(\epsilon \epsilon^{\prime}),
\nom\\
&& \sbv{\bJ_{(1)(u,a)}{}_{cl}(\epsilon)}
{\bJ_{(1)(u^\prime,a^\prime)}{}_{cl}(\epsilon^{\prime})}_{PB}
= - \int_{\Sigma_2}
\cepsilon_{(1)}^{(1)} \wedge \cepsilon^{(1)\prime}_{(1)} k^{AB}
u_A u_B^\prime{}_{cl},
\nom\\
&&
\sbv{\bJ_{(2)(G,K,F,B,E)}{}_{cl}(\epsilon)}
{\bJ_{(1)(u^\prime,a^\prime)}{}_{cl}(\epsilon^{\prime})}_{PB}
\nonumber \\
&& = -\bJ_{(1)(\bar u,\bar \alpha)}{}_{cl}(\epsilon \epsilon^{\prime})
- \int_{\Sigma_2} \rd \cepsilon_{(0)} \wedge \cepsilon^{(1)\prime}_{(1)}
(G^{I}\alpha^\prime_I - k^{AB}K_A u_B^\prime){}_{cl},
\nom\\
&&
\sbv{\bJ_{(2)(G,K,F,B,E)}{}_{cl}(\epsilon)}
{\bJ_{(2)(G^\prime,K^\prime,F^\prime,B^\prime,E^\prime)}{}_{cl}
(\epsilon^{\prime})}_{PB}
    =-\bJ_{(2)(\bar G, \bar K, \bar F,\bar B,\bar E)}{}_{cl}
(\epsilon \epsilon^{\prime})
\nom\\&&
\qquad
- \int_{\Sigma_2}
\rd \cepsilon_{(0)} \cepsilon^{\prime}_{(0)}
\left[(G^JB^\prime_{JI}+G^{\prime J}B_{JI}
             +k^{AB}(K_A E_{BI}^\prime+E_{AI}K_{B}^\prime)){}_{cl}
\rd x^I
\right.
\nom\\&&
\qquad
\left.
+(G^IE^\prime_{AI}+G^{\prime I}E_{AI}
                +k^{BC}(K_B F_{AC}^\prime+F_{AC}K_{B}^\prime)){}_{cl}
q^A \right],
\eeqa
where $\cepsilon_{(i)}^{(i)} = \frac{1}{i!} dx^{\mu_1} \wedge
\cdots \wedge dx^{\mu_i} \epsilon_{(i) \mu_1 \cdots \mu_i}(\sigma)$
is the $i$-form part of the test function and 
and $q^A = d x^{\mu} q^A_{\mu}$.
This coincides with the generalized current algebra in two dimensions
in \cite{Ikeda:2011ax}.
\end{example}

\begin{example}[Current algebras of topological $n$-branes]
\cite{Bonelli:2005ti}
Consider an $n+1$ dimensional spacetime $X = \Sigma_n \times \bR$, where
$\Sigma_n$ is a (compact, orientable) manifold in $n$ dimensions.
We construct a generalized current algebra on the mapping space 
$\Map(\Sigma_n, T^*M)$, where $M$ is a target manifold.

Take a super phase space $\Map(T[1]\Sigma_n, T^*[n]M)$.

First, we consider the target graded manifold $T^*[n]M$.
A cotangent bundle $T^*[n+1]\calM = T^*[n+1]T^*[n]M$
is a graded symplectic manifold of degree $n+1$ with 
a canonical symplectic form.
Take local coordinates $(x^I, p_I)$
of degree $(0, n)$ on $T^*[n]M$,
and conjugate Darboux coordinates $(\xi_I, \chi^I)$ of degree
$(n+1, 1)$ on $T^*[n+1]$.
The canonical graded symplectic structure on $T^*[n+1]\calM$ is
\beq
\omega_{T^*[n+1]T^*[n]M} = \delta x^I\wedge\delta \xi_I
+\delta p_I\wedge\delta \chi^I.
\label{nbranesympform}
\eeq
We take the following horizontal homological function of degree $n+2$:
\beq
\Theta=  \chi^I \xi_I
+ \frac{1}{(n+2)!}
H_{I_1 I_2 \cdots I_{n+2}}(x) \chi^{I_1} \chi^{I_2} \cdots \chi^{I_{n+2}}.
\eeq
It is homological if $H$ is a closed $n+2$-form.

Next, we consider the mapping space $\Map(T[1]\Sigma_n, T^*[n]M)$.
Take local coordinates on $T[1]\Sigma_n$, 
$(\sigma^{\mu}, \theta^{\mu})$ of degree $(0,1)$.
Local coordinates on the mapping space are denoted by boldfaces of 
corresponding local coordinates on the target space.
$\bx^I(\sigma, \theta)$ is a smooth map from $T[1]\Sigma_n$ to $M$
and
$\bp_I(\sigma, \theta) \in \Gamma(T^*[1]\Sigma_n
\otimes \bx^*(T^*[n]M))$
is a superfield of degree $n$.
A graded symplectic form induced from Eq.~\eqref{nbranesympform}
is of degree $1$ and defined as
\beq
\bomega_{T^*[n+1]T^*[n]M}
=\int_{T[1] \Sigma_n}
\rd^n \sigma \rd^n \theta \,
(\delta \bx^I\wedge\delta\vxi_I
+\delta \bp_I\wedge\delta \vchi^I),
\eeq
where
$\vxi_I(\sigma,\theta) \in \Gamma(T^*[1]\Sigma_n \otimes \bx^*(T^*[n+1]M))$
 and
$\vchi^I(\sigma,\theta)
\in \Gamma(T^*[1]\Sigma_2 \otimes \bx^*(T^*[n+1]T^*[n]M))$
are canonical conjugate superfields.
A Q-structure homological function is
\begin{eqnarray}
\bTheta
&=&
\mu_* \ev^* \Theta
\nonumber \\
&=& \int_{T[1]\Sigma_n}
\rd^n \sigma \rd^n \theta \,
\left(
\vchi^I\vxi_I
+ \frac{1}{(n+2)!}
H_{I_1 I_2 \cdots I_{n+2}}(\bx)
\vchi^{I_1} \vchi^{I_2} \cdots \vchi^{I_{n+2}}
\right).
\end{eqnarray}

Next we consider functions for currents on the target space. 
The space is the set of functions
of degree equal to or less than $n$,
$\Cn(T^*[n+1]T^*[n]M)
= \{f \in C^{\infty}(T^*[n+1]T^*[n]M) | |f| \leq n \}$.
Here we only consider functions of degree $n$
on $\Cn(T^*[n+1]T^*[n]M)$ because the currents
constructed from functions of degree less than $n$
have trivial Poisson brackets.
A general form of a function of degree $n$ is 
\begin{eqnarray}
J_{(n)(G,B)} &=&
    G^I(x) p_{I}
+\frac{1}{n!}B_{I_1\cdots I_{n}}(x)\chi^I \cdots \chi^{I_{n}},
\end{eqnarray}
where $G^I(x)$ and $B_{I_1\cdots I_{n}}(x)$ are local functions of $x$.
$G^I(x) \partial_I$ is a vector field on $M$ and
$B= \frac{1}{n!} B_{I_1\cdots I_{n}}(x) dx^{I_1} \cdots dx^{I_n}$ 
is an $n$-form on $M$.
Take the space of twisted functionals on the mapping space
$\calC\calA_{\leq n}(T^*[n+1]T^*[n]M))
 = \{\calJ \in
C^{\infty}(\Map(T[1]\Sigma_n, T^*[n+1]T^*[n]M))
| \bJ = e^{\mathrm{ad}(\bvartheta_{T^*[n]M})}
\mu_* \epsilon \, \ev^* J, J \in \Cn(T^*[n+1]T^*[n]M), |J|=n
\}$.
%
A twisting function is given by 
\begin{eqnarray}
\bvarthetas
&=& \iota_{\hat{D}} \mu_* \ev^* \vartheta_s
= \int_{T[1]\Sigma_n}
\rd^n \sigma \rd^n \theta \,
(-1)^{n+1}
\bp_I \bbd \bx^I
.
\end{eqnarray}
%
Then the BFV current $J_{(n)}$ is
\begin{eqnarray}
\bJ_{(n)(G,B)}({\sigma,\theta})
&=& \int_{T[1]\Sigma_n} \rd^n \sigma \rd^n \theta \,
\epsilon_{(0)}
\left(G^I(\bx) \bp_{I}
+\frac{1}{n!}B_{I_1\cdots I_{n}}(\bx)
\bbd \bx^I \cdots \bbd \bx^{I_{n}}
\right),
\end{eqnarray}
with a test function $\epsilon_{(0)}$.
Applying the formula \eqref{supercurrentcommutation02},
we obtain the current algebra as follows:
\beqa
&&
\sbv{\bJ_{(n)(G,B)}(\epsilon)}
{\bJ_{(n)(G^\prime,B^\prime)}
(\epsilon^{\prime})}_{PB}
= -\bJ_{(n)(\courant{J_1}{J_2})}
(\epsilon \epsilon^{\prime})
\nom\\&&
\qquad
-
e^{\mathrm{ad}(\bvartheta_{T^*[n]M})}
\int_{T[1]\Sigma_n} \rd^n \sigma \rd^n \theta \,
 (\bbd \epsilon_{(0)}) \epsilon^{\prime}_{(0)}
\ev^* \bracket{J_{(n)(G,B)}}{J_{(n)(G^\prime,B^\prime)}}.
\eeqa
Here
$\courant{J_1}{J_2}$ is the higher Dorfman bracket
on $TM \oplus \wedge^n T^*M$
 defined by
\begin{eqnarray}
\courant{(u,a)}{(v,b)}
= [u, v] + L_u b - \iota_v d a,
\end{eqnarray}
for $u, v \in \Gamma(TM)$ and $a, b \in \Gamma(\wedge^n T^*M)$,
and $\bracket{J_1}{J_2}
= \iota_u b + \iota_v a$ is a pairing,
$\bracket{-}{-}:(TM \oplus \wedge^n T^*M) \times
(TM \oplus \wedge^n T^*M) \rightarrow \wedge^{n-1} T^*M$.
The classical part of the equations,
\beqa
&&
\sbv{\bJ_{(n)(G,B)}{}_{cl}(\epsilon)}
{\bJ_{(n)(G^\prime,B^\prime)}{}_{cl}
(\epsilon^{\prime})}_{PB}
= -\bJ_{(n)(\courant{J_1}{J_2})}{}_{cl}
(\epsilon \epsilon^{\prime})
\nom\\&&
\qquad
-
\frac{1}{(n-1)!}
\int_{\Sigma_n} \rd \cepsilon_{(0)} \cepsilon^{\prime}_{(0)}
\ev^* \bracket{J_{(n)(G,B)}}{J_{(n)(G^\prime,B^\prime)}}
|_{\chi^I = \rd x^I}
\eeqa
coincides with a generalized current algebra of 
a topological $n$-brane theory 
in \cite{Bonelli:2005ti}.
\end{example}

\begin{example}[Homotopy algebroid current algebras
in higher dimensions]\cite{Ikeda:2011ax}
For general $n$, more general current algebras connected QP-manifold of degree $n$ on $n$-dimensional spaces $\Sigma_n$ \cite{Severa:2001}
(These structures are called Lie $n$-algebroid structures.)
is obtained by our formula \ref{supercurrentcommutation02}.
\end{example}

\section{Current algebras in AKSZ sigma models}\label{CAofAKSZ}
\noindent
Current Algebras in AKSZ sigma models are constructed
by the supergeometric formalism of current algebras in
the previous section.

If we take the commutative subspace $\Bn(T^*[n+1]\calM, \alpha)$
of the space $\Cn(T^*[n+1]\calM)$ twisted by the function $\alpha$
in Eq.~\eqref{Bntwist},
a current algebra without anomalies is also given
by the formula in Theorem \ref{supercurrentalgebratheorem}.
These current algebras are ones in AKSZ sigma models.

Take a QP manifold of degree $n+1$, $(T^*[n+1]\calM, \omega, \Theta, \alpha)$,
where $\Theta$ is assumed a horizontal homological function.
and $\alpha$ is a twisting function of degree $n$.
Suppose that the restricted derive bracket
$\ssbv{-}{-} = \bsbv{\bsbv{-}{\Theta}}{-}|_{\calM}$ is nondegenerate.
Then $\ssbv{-}{-}$ is a graded Poisson bracket of degree $n$,
and $(\calM, \omega_{\calM})$ is graded symplectic.

A QP structure on $\Map(T[1]\Sigma_n, T^*[n+1]\calM)$
is induced by the AKSZ construction,
where the QP structures are given by $\bomega_{T^*[n+1]\calM} 
= \mu_* \ev^* \omegab$ and $\bTheta
=
\mu_* \ev^* \Theta$.

At first, we consider the simplest case, $\alpha=0$,
$\Bn(T^*[n+1]\calM, 0)$.
By a transgression map, we obtain a corresponding space on the mapping space,
$\calB\calA_{\leq n}(T^*[n+1]\calM, 0)
= \{\calJ = \mu_* \epsilon \, \ev^* J \in
C^{\infty}(\Map(T[1]\Sigma_n, T^*[n+1]\calM))
| J \in \Bn(T^*[n+1]\calM, 0), J|_{T^*[n+1]} =0
\}$.

Using our formula \ref{supercurrentalgebratheorem}, we obtain
a current algebra on this space.
This current algebra is trivial 
$\sbv{\calJ_1}{\calJ_2}_{PB} = \ssbv{\calJ_1}{\calJ_2} =0$
for any $\calJ_1, \calJ_2 \in \calB\calA_{\leq n}(T^*[n+1]\calM, 0)$
since $J_1$ and $J_2$ commute.

Next we consider the space twisted by $\alpha$,
$\Bn(T^*[n+1]\calM, \alpha)
= \{e^{\mathrm{ad} (\alpha)} f | f \in \Bn(T^*[n+1]\calM,0) \}$.
Since $\Theta$ is a horizontal homological function,
$\Bn(T^*[n+1]\calM, \alpha)$
is a Poisson algebra by the projected derived bracket.
Twist by $\bvarthetas$
gives a nontrivial current algebra 
if we choose proper nonzero $\alpha$.

Two twists are considered above, 
a twist $\alpha$ and a twist by $\bvarthetas$. They are
unified to the twist by the following one functional on the mapping space
 as
\begin{eqnarray}
\tiS
&=& \bvarthetas + \balpha
= \iota_{D} \mu_* \ev^* \varthetas
+ \mu_* \ev^* \alpha.
\label{AKSZtwistedpullback}
\end{eqnarray}
The space of BFV currents is $\Bn(T^*[n+1]\calM, 0)$
twisted by Equation (\ref{AKSZtwistedpullback}),
\begin{align}
&\calB\calA_{\leq n}(T^*[n+1]\calM, \alpha)
\nonumber \\
&: = \{\bJ = e^{\mathrm{ad}(\tiS)} \mu_* \epsilon \, \ev^* J \in
C^{\infty}(\Map(T[1]\Sigma_n, T^*[n+1]\calM))
| J \in \Bn(T^*[n+1]\calM, 0)
\}.
\end{align}
Since $\ssbv{\bvarthetas}{\balpha}=0$,
this space is equivalent to the space of functionals 
induced from $\Bn(T^*[n+1]\calM, \alpha)$
twisted by $\bvarthetas$:
\begin{eqnarray}
\{\bJ = e^{\mathrm{ad}(\bvarthetas)} \mu_* \epsilon \, \ev^* J \in
C^{\infty}(\Map(T[1]\Sigma_n, T^*[n+1]\calM))
| J \in \Bn(T^*[n+1]\calM, \alpha)
\}.
\end{eqnarray}
The space $\calB\calA_{\leq n}(T^*[n+1]\calM, \alpha)$ is
a Poisson algebra, i.e. a current algebra on $\Map(T[1]\Sigma_n, \calM)$
by the Poisson bracket $\ssbv{-}{-} = \sbv{-}{-}_{PB}$ of degree zero.
We can refer to a homological function of an 
AKSZ sigma model,
or a twisted AKSZ sigma model
by $\tiS$ \cite{Ikeda:2012pv, Ikeda:2013wh}.
In fact, the Poisson algebra 
$\calB\calA_{\leq n}(T^*[n+1]\calM, \alpha)$
is equivalent to
the BFV current algebra of a (twisted) AKSZ sigma model
with the Q-structure action (\ref{AKSZtwistedpullback})
as in examples belows.

If $\ssbv{\tiS}{\tiS}=0$, which is a special case,
$\tiS$ defines the (nontwisted) genuine AKSZ sigma model
on
$\Map(T^*[1]\Sigma_n, \calM)$
with the Q-structure $\tiS$,
which is equivalent to the BFV formalism of the AKSZ sigma model.


\begin{example}[$n=1$: Poisson Sigma Models]\label{examplePSM}
\noindent
Let us take the same setting as the twisted Poisson structure in Example
\ref{examplepoisson} with $H=0$,
a QP-structure of degree $2$ on $T^*[2]\calM = T^*[2]T^*[1]M$
with the canonical symplectic form $\omega_{T^*[2]T^*[1]M}$,
the horizontal homological function $\Theta = \xi_I q^I$,
and a function of degree $2$, $\alpha = \frac{1}{2} \pi^{IJ}(x) p_I p_J$.

Let $X = S^1 \times \bR$ be a two-dimensional worldsheet,
and take a supermanifold $T[1]S^1$.
The AKSZ construction defines a QP structure
of degree $1$ on $\Map(T[1]S^1, T^*[2]T^*[1]M)$.
The symplectic structure of degree $1$
on $\Map(T[1]S^1, T^*[2]T^*[1]M)$ is
\begin{eqnarray}
\bomega_{T^*[2]T^*[1]M}
&=& \mu_* \ev^* \omega_{T^*[2]T^*[1]M}
= \int_{T[1]S^1}
\rd \sigma \rd \theta \,
\left(\delta \bx^{I} \wedge \delta \bxi_{I}
+ \delta \bp_{I} \wedge \delta \bq^{I}
\right).
\end{eqnarray}
A horizontal homological function is
\begin{eqnarray}
\bTheta
&=&
\mu_* \ev^* \Theta
= \int_{T[1]S^1}
\rd \sigma \rd \theta \,
\bxi_I \bq^I
.
\end{eqnarray}
The projected derived bracket
$\ssbv{-}{-}
= \bsbv{\bsbv{-}{\bTheta}}{-}|_{T^*[1]M}
$ is nondegenerate. In fact, the symplectic form 
on $\Map(T[1]S^1, T^*[1]M)$,
is $\bomega_{T^*[1]M} = \int_{T[1]S^1}
\rd \sigma \rd \theta \, \delta \bx^i \wedge \delta \bp_I
$.

The twisting function in Equation (\ref{AKSZtwistedpullback})
on the mapping space is
\begin{eqnarray}
S_{T^*[1]M}
&=& \int_{T[1]S^1}
\rd \sigma \rd \theta \,
\
\left(
\bp_I \bbd \bx^I
+ \frac{1}{2} \pi^{IJ}(\bx) \bp_I \bp_J
\right),
\end{eqnarray}
which is eq uivalent to
the BRST charge of the BFV formalism of the Poisson sigma model
\cite{Ikeda:1993aj, Schaller:1994es} if we expand superfields to component fields.

In order to construct the current algebra of this model,
we take the basis $(x^I, p_I)$ of $B_{\leq 1}(T^*[2]T^*[1]M,0)$
since $B_{\leq 1}(T^*[2]T^*[1]M,0) \simeq C^{\infty}(T^*[1]M)$.
We twist the basis,
\begin{eqnarray}
J^I_{(0)}
&=& e^{\mathrm{ad} (\alpha)} x^I
= 0,
\nonumber \\
J^I_{(1)}
&=& e^{\mathrm{ad} (\alpha)} q^I
= q^I + \pi^{IJ}(x) p_J,
\end{eqnarray}
Thus we obtain supergeometric currents,
\begin{eqnarray}
&& \bJ^I_{(0)} = 0,
\\
&& \bJ^I_{(1)}
= e^{\mathrm{ad}(S_{T^*[1]M})} \mu_* \epsilon \, \bq^I
= \int_{T[1]S^1} \rd \sigma \rd \theta \, 
\epsilon_{(0)}
\, (\bbd \bbx^{I} + \pi^{IJ}(\bbx) \bp_{J}).
\end{eqnarray}
on
$\calB\calA_{\leq 1}(T^*[2]T^*[1]M, \alpha)$.

The Poisson bracket of the current $\bJ^I_{(1)}$
is calculated 
using the identity (\ref{poissonjacobi}) with $H=0$
and the formula (\ref{supercurrentcommutation}):
\begin{eqnarray}
\sbv{\bJ_{(1)}^{I}(\epsilon)}
{\bJ_{(1)}^{J}(\epsilon^{\prime})}_{PB} &=&
- e^{\mathrm{ad}(S_{T^*[1]M})} \mu_* \epsilon_{(0)}  \epsilon^{\prime}_{(0)} 
\ev^* \bsbv{\bsbv{J^I}{\Theta}}{J^J}|_{\Map(T[1]S^1, T^*[1]M)} \,
\nonumber \\
&=&
- e^{\mathrm{ad}(S_{T^*[1]M})} \mu_* \epsilon_{(0)}  \epsilon^{\prime}_{(0)}  \,
\ev^* \frac{\partial \pi^{IJ}}{\partial x^{K}}(x) J^K.
\end{eqnarray}
Note that there is no anomalous term.
The classical part of this equation becomes the correct current algebra of
the Poisson sigma model.
Taking the ghost number zero part, we obtain the physical current algebra,
\begin{eqnarray}
\sbv{\cJ_{(1)}^{I}(\sigma)}
{\cJ_{(1)}^{J}(\sigma^{\prime})}_{PB} &=&
- \frac{\partial \pi^{IJ}}{\partial x^{K}}(x(\sigma))
\cJ^{K}_{(1)}(\sigma)
\delta(\sigma-\sigma^{\prime}),
\end{eqnarray}
where $\cJ_{(1)}^{I}(\sigma) = \partial_{\sigma} x^{I}(\sigma)
+ \pi^{IJ}(x(\sigma)) p_{J}(\sigma).$

\end{example}
\vskip10pt

\begin{example}[$n=2$: Courant sigma models]
Consider a QP-structure of degree $3$ on 
$T^*[3]\calM = T^*[3]T^*[2]E[1]$,
where $E$ is a vector bundle on a smooth manifold $M$ and $\calM = T^*[2]E[1]$.
Let $X = \Sigma_2 \times \bR$ be a manifold of dimension $3$
where $\Sigma_2$ is a compact (oriented) manifold of dimension $2$.
We take a base supermanifold $T[1]\Sigma_2$.

The AKSZ construction gives a QP structure
of degree $1$ on $\Map(T[1]\Sigma_2, T^*[3]\calM)$.
Take local coordinates on $T^*[3]\calM$ as
$(x^I, q^A, p_I)$ of $(0,1,2)$
and conjugate coordinates $(\xi_I, \eta_A, \chi^I)$ of degree
$(3,2,1)$.
Corresponding local coordinate superfields on the mapping space
are denoted by boldface letters.
$\bbx^{I}$ is a map from $T[1]\Sigma_2$ to $M$,
$\bq^{A}$ is a degree $1$
section of $T^*[1]\Sigma_2 \otimes \bbx^*(E[1])$,
$\bp_{I}$ is a degree $2$
section of $T^*[1]\Sigma_2 \otimes \bbx^*(T^*[2]M)$,
$\bxi_{I}$ is a degree $3$
section of $T^*[1]\Sigma_2 \otimes \bbx^*(T^*[3]M)$,
$\veta_{A}$ is a degree $2$
section of $T^*[1]\Sigma_2 \otimes \bbx^*(T^*[3]E[1])$,
and $\vchi^{I}$ is a degree $1$
section of $T^*[1]\Sigma_2 \otimes \bbx^*(T^*[3]T^*[2]M)$.

The symplectic structure
on $\Map(T[1]\Sigma_2, T^*[3]\calM)$ is
\begin{eqnarray}
\bomega_{T^*[3]\calM}
&=& 
\int_{T[1]\Sigma_2}
\rd^2 \sigma \rd^2 \theta \,
\left(
\delta \bx^I\wedge\delta\vxi_I
    -\delta\bq^A\wedge\delta \veta_A
+\delta \bp_I\wedge\delta \vchi^I
\right).
\end{eqnarray}
A horizontal homological function is chosen as
$\Theta = \chi^I\xi_I +\frac{1}{2}k^{AB}\eta_A\eta_B$,
where $k_{AB}$ is a fiber metric on $E$.
Then the homological function on the mapping space is
\begin{eqnarray}
S_{T^*[3]\calM}
&=& \int_{T[1]\Sigma}
\rd^2 \sigma \rd^2 \theta \,
\left(
\vchi^I\vxi_I
+\frac{1}{2}k^{AB}\veta_A\veta_B
\right),
\end{eqnarray}
which trivially satisfies $\bsbv{S_{T^*[3]\calM}}{S_{T^*[3]\calM}}=0$
and horizontal.
The projected derived bracket
$\ssbv{-}{-}
= \bsbv{\bsbv{-}{\bTheta}}{-}|_{\calM}$ defines
the graded symplectic form $\bomega_{\calM}$ on $\Map(T[1]\Sigma_2, \calM)$
since it is nondegenerate.

We take a twisting function of degree $3$ 
as
$$\alpha =
\rho^I{}_{A}(x) q^A p_I
+\frac{1}{3!} H_{ABC}(x) q^A q^B q^C.$$
Here $\rho$ and $H$ define the bundle map
$\rho = \rho{}^I{}_{A}(x) e^A \frac{\partial}{\partial x^I}
: E \rightarrow TM$
and a section of $\wedge^3 E^*$,
$H= \frac{1}{3!} H_{ABC}(x) e^A \wedge e^B \wedge e^C$,
where $e^A$ is a section on $E$.
We impose $\ssbv{\alpha}{\alpha}=0$, which is equivalent to 
the Courant algebroid structure on $E$.
The twisting function induced from $\varthetas$ and $\alpha$ is
\begin{align}
\tiS
&= \bvarthetas + \balpha
\nonumber \\ 
&= \int_{T[1]\Sigma_2}
\rd^2 \sigma \rd^2 \theta \,
\left(
- \bp_I \bbd \bx^I
+ \frac{1}{2} k_{AB} \bq^A \bbd \bq^B
+
\rho{}^I{}_{A}(\bx) \bq^A \bp_I
+\frac{1}{3!} H_{ABC}(\bx) \bq^A \bq^B \bq^C
\right),
\end{align}
which is of degree $1$ and coincides with
the BRST charge of the BFV formalism of the Courant sigma model
\cite{Ikeda:2002wh, Roytenberg:2006qz}.

In order to construct the current algebra of the Courant sigma model,
we take basis $(x^I, \chi^I, \eta_A)$
of the commutative subspace $B_{\leq 2}(T^*[3]T^*[2]E[1], 0)$,
and twist by $\alpha$.
Basis are
\begin{eqnarray}
J^I_{(0)}
&=&
e^{\mathrm{ad} (\alpha)} x^I
= 0,
\nonumber \\
J^I_{(1)}
&=&
e^{\mathrm{ad} (\alpha)} \chi^I
= \chi^I -
\rho{}^I{}_{A}(x) q^A,
\nonumber \\
J_{(2)A}
&=&
e^{\mathrm{ad} (\alpha)} \eta_A
= \eta_A -
\rho{}^I{}_{A}(x) p_I
- \frac{1}{2} H{}_{ABC}(x) q^B q^C.
\end{eqnarray}
Thus, supergeometric BFV currents are obtained as
\begin{eqnarray}
\bJ^I_{(0)} &=& 0,
\\
\bJ^I_{(1)}
&=&
e^{\mathrm{ad}(S_{\calM})} \mu_* \epsilon \, \ev^* \chi^I
=
\int_{T[1]\Sigma_2} \rd^2 \sigma \rd^2 \theta \, \epsilon_{(1)} \,
(\bbd \bbx^{I}
- \rho{}^I{}_{A}(\bx) \bq^A),
\nonumber \\
\bJ^I_{(2)}
&=&
e^{\mathrm{ad}(S_{\calM})} \mu_* \epsilon \, \ev^* \eta_A
\nonumber \\
&=&
\int_{T[1]\Sigma_2} \rd^2 \sigma \rd^2 \theta \, \epsilon_{(0)} \,
(k_{AB} \bbd \bq^B -
\rho{}^I{}_{A}(\bx) \bp_I
- \frac{1}{2} H{}_{ABC}(\bx) \bq^B \bq^C).
\end{eqnarray}


The current algebra is calculated by the formula 
\eqref{supercurrentcommutation02}.
It is
\beqa
  \{\bJ_{(1)}^I(\epsilon),
\bJ_{1}^J(\epsilon^\prime)\}_{PB}&=&0,
\nonumber \\
  \{\bJ_{(2)A}(\epsilon),
\bJ_{(1)}^I(\epsilon^\prime)\}_{PB}
     &=&
- e^{\mathrm{ad}(S_{\calM})} \mu_* \epsilon_{(0)} \, \epsilon^{\prime}_{(1)} \, \ev^*
\frac{\partial \rho^I{}_{A}}{\partial x^J} (x)
J_{(1)}^J,\nom\\
  \noalign{\vskip 2mm}
   \{\bJ_{(2)A}(\epsilon),
\bJ_{(2)B}(\epsilon^\prime)\}_{PB}
     &=&
e^{\mathrm{ad}(S_{\calM})} \mu_* \epsilon_{(0)} \, \epsilon^{\prime}_{(0)} \, \ev^*
\left(\frac{\partial H_{ABC}}{\partial x^I}(x) J_{(1)}^I q^{C}
+ H_{ABC}(x) k^{CD} J_{(2)D}\right).
\nonumber
\eeqa
Taking the classical part of this equation, we obtain the
correct current algebra of the Courant sigma model:
\beqa
\{\cJ_{(1)i}^I({\sigma}), \cJ_{(1)j}^J({\sigma}^\prime)\}_{PB}&=&0,
\nonumber \\
\{\cJ_{(2)A}({\sigma}), \cJ_{(1)i}^I({\sigma^\prime})\}_{PB}
&=& - \frac{\partial \rho^I{}_{A}}{\partial x^J} (x)
\cJ_{(1)i}^J ({\sigma})
\delta^2({\sigma}-{\sigma}^\prime),
\nom\\
\{\cJ_{(2)A}({\sigma}), \cJ_{(2)B}({\sigma}^\prime)\}_{PB}
&=& {\Big (}\frac{\partial H_{ABC}}{\partial x^I}(x)
(\cJ_{(1)1}^I ({\sigma})q_2^{C}({\sigma})
- \cJ_{(1)2}^I ({\sigma})q_1^{C}({\sigma}))
\nom\\
&& + H_{ABC}(x)k^{CD}\cJ_{(2)D}({\sigma}){\Big )}
\delta^2({\sigma}-{\sigma}^\prime),
\nonumber
\eeqa
where
\begin{eqnarray}
   \cJ_{(1)i}^I(\sigma)&=&\partial_{i} x^I(\sigma)
+ \rho^I{}_{A}(x(\sigma)) q_{i}^A(\sigma),
\nom\\
   \cJ_{(2)12, A}(\sigma)
&=&k_{AB}(\partial_1q^B_2({\sigma})-\partial_2q^B_1({\sigma}))
   +\rho^I{}_{A}(x({\sigma}))p_{I12}({\sigma})
   +H_{ABC}(x({\sigma}))q_1^B({\sigma})q_2^C({\sigma}).
\label{currentCSM}
\nonumber
\end{eqnarray}

\end{example}

\section{Summary and outlook}
\noindent
We have reformulated the current algebra theory by supergeometry 
and constructed new current algebras with
higher structures induced from QP structures.
This new formulation makes not only the calculations easier
but also shed light on new mathematical structures of the current algebras.
We reformulated current algebra based on algebroids and obtained new current 
algebras based on higher algebroids.

Extension of our theory to more general
current algebras including higher derivative terms
\cite{Ekstrand:2009qz} is interesting.
Another recent analysis for a current algebra theory appears in
\cite{Alekseev:2010gr, ALEKSEEV:2013ypa}.
It is interesting to analyze relations to our theory.

Though we have analyzed classical aspect of the current algebra,
a quantum version of the theory is important for mathematical properties 
and applications of current algebras.

The formulation in Theorem \ref{supercurrentalgebratheorem}
is based on an explicit expression of a set of functions 
under the derived bracket including a restriction.
A general theory for the derived bracket and restriction
is remained to be understood.

\subsection*{Acknowledgments}
\noindent
We would like to thank Pavol \v{S}evera for useful comments on the preliminary version of the manuscript. X.X.~is especially grateful to Anton Alekseev for inspiring conversations and helpful suggestions in writing this paper.
N.I.~is supported by the grant of Maskawa Institute and by JSPS Grants-in-Aid for Scientific Research Number 22K03323.
X.X.~is supported by the grant PDFMP$2_{-}$141756 
of the Swiss National Science Foundation.

\subsection*{Data Availability}
\noindent
The data that supports the findings of this study are available within the article.

\appendix
\section{Super functions on twisted Poisson manifolds}
\noindent
Here, we explain a typical example of the space of current functions
in the $n=1$ case.

\begin{example}[$n=1$: Twisted Poisson structure]\label{examplepoisson}

Let $\calM = T^*[1]M$ be a graded manifold of degree 1
on a smooth manifold $M$,
and $(T^*[2]\calM, \omega_{T^*[2]\calM}, \Theta)$ a QP-manifold of degree $2$.
Take local coordinates $(x^I, p_I, q^I, \xi_I)$ of degree $(0,1,1,2)$,
where $(\xi_I, q^I)$ is the conjugate basis of $(x^I, p_I)$, i.e.
$\calM = T^*[1]M = \{(x^I, p_I, q^I, \xi_I)|\xi_I = q^I =0 \}$.
The canonical graded symplectic structure is
$$
\omega_{T^*[2]\calM} = \delta x^I \wedge \delta \xi_I 
+ \delta p_I \wedge \delta q^I,
$$
which defines a Poisson bracket $\bsbv{-}{-}$ such that
$\bsbv{x^I}{\xi_J} = \bsbv{p_J}{q^I} = \delta^I{}_J$,
and other Poisson brackets are zero.

Since a homological function $\Theta$ is of degree $3$, 
we can assume $\Theta$ as
\begin{eqnarray}
\Theta = \xi_I q^I + \frac{1}{3!} H_{IJK}(x) q^I q^J q^K,
\end{eqnarray}
Here $H$ defined by $H = \frac{1}{3!} H_{IJK}(x) dx^I \wedge dx^J \wedge dx^K$ 
is a closed $3$-form on $M$.
Then, $\Theta$ satisfies
$\bsbv{\Theta}{\Theta}=0$,
and this $\Theta$ is a horizontal homological function.

Let us consider the space of current functions for $n=1$,
$$
C_1(T^*[2]T^*[1]M)
= \{f \in C^{\infty}(T^*[2]T^*[1]M)| |f| \leq 1 \}.
$$
Elements are functions of degree $0$ and $1$ on $T^*[2]\calM$.
A function of degree $0$ is a function on $M$, $J_{(0)f} = f(x)$.
A functions of degree $1$ is linear of $q^I$ and $p_I$,
$J_{(1)(u, a)} = a_I(x) q^I + u^I(x) p_I$.
$J_{(1)(u, a)}$ is regarded as a section of $TM \oplus T^*M$
since $C^{\infty}(T[1]M)$ is identified to $\Gamma(\wedge^{\bullet}(T^*M))$,
and $C^{\infty}(T^*[1]M)$ is identified to $\Gamma(\wedge^{\bullet}(TM))$.
That is,
$a_I(x) q^I + u^I(x) p_I$
is identified with
$a_I(x) d x^I + u^I(x) \frac{\partial}{\partial x^I}$.
As explained in the general theory, a small graded Poisson bracket on 
$C_1(T^*[2]T^*[1]M)$
is given by the pullback derived bracket,
$\ssbv{-}{-} = \proj_* \bsbv{\bsbv{ -}{\Theta}}{ -}$.
This small bracket is parametrized by $(x^I, p_I)$ and
satisfies $\ssbv{x^I}{p_J} = \delta^I{}_J$
and the Poisson bracket is nondegenerate.

Let
$J_{(0)(v)}^{\prime} = g(x) \in C_1(T^*[2]T^*[1]M)$ and
$J_{(1)(v, b)}^{\prime} = b_I(x) q^I + v^I(x) p_I
\in C_1(T^*[2]T^*[1]M)$.
The derived brackets of functions are computed as
\begin{eqnarray}
\bsbv{\bsbv{J_{(0)(f)}}{\Theta}}{J_{(0)(g)}^{\prime}} &=& 0,
\label{derivedbrackettwistedPoisson01}\\
\bsbv{\bsbv{J_{(1)(u, a)}}{\Theta}}{J_{(0)(g)}^{\prime}}
 &=& -u^{I}\frac{\del J_{(0)(g)}^{\prime}}{\del x^I},
\label{derivedbrackettwistedPoisson02}\\
\bsbv{\bsbv{J_{(1)(u, a)}}{\Theta}}{J_{(1)(v, b)}^{\prime}}
&=& -
\left[\left(u^J\frac{\del v^{I}}{\del x^J}
    -v^{J}\frac{\del u^I}{\del x^J}\right) p_I\right.
\nom\\
    && \left.+\left(u^J\frac{\del b_I}{\del x^J}
    -v^{J}\frac{\del a_I}{\del x^J}
    +v^{J}\frac{\del a_J}{\del x^I}
    + b_J \frac{\del u^J}{\del x^I}
    +H_{JKI} u^{J} v^{K}
\right) q^I \right]
\nonumber \\
    &=& - J_{(1)(\courant{(u,a)}{(v,b)})}.
\label{derivedbrackettwistedPoisson03}
\end{eqnarray}
where $\courant{(u,a)}{(v,b)}$ is a Dorfman bracket on
$TM \oplus T^*M$ defined by
\begin{eqnarray}
\courant{(u,a)}{(v,b)}
&=& [u, v] + L_u b - \iota_v d a + \iota_u \iota_v H,
\label{2dsupercommutation}
\end{eqnarray}
for $u, v \in \Gamma(TM)$, $a, b \in \Gamma(T^*M)$,
$H$ is a closed $3$-form,
$[u,v]$ is a Lie bracket of vector fields,
$L_u$ is a Lie derivative and
$\iota_v$ is the interior product.
\eqref{derivedbrackettwistedPoisson01}--
\eqref{derivedbrackettwistedPoisson03}
induce graded Poisson brackets on the small P-manifold $\calM$:
\begin{eqnarray}
&& \ssbv{J_{(0)(f)}}{J_{(0)(g)}^{\prime}} =0,
\nom\\
&&
\ssbv{J_{(1)(u,\alpha)}}{J_{(0)(g)}^{\prime}} = -u^{I}
\frac{\del J_{(0)(g)}^{\prime}}{\del x^I},
\nom\\
&&
\ssbv{J_{(1)(u,a)}}{J_{(1)(v,b)}^{\prime}}
    = - J_{(1)([u,v],0)}.
\label{smallsupercommutation}
\end{eqnarray}

Next, we identify $Comm_1(T^*[2]\calM)$, which 
is a Poisson commutative subspace with respect to the big Poisson bracket 
$\bsbv{-}{-}$. Big Poisson brackets of current functions are
\begin{eqnarray}
\bsbv{J_{(0)(f)}}
{J_{(0)(g)}^{\prime}} &=&0,
\nom\\
\bsbv{J_{(1)(u,a)}}
{J_{(0)(g)}^{\prime}}
&=& 0.
\nom\\
\bsbv{J_{(1)(u,a)}}
{J_{(1)(v,b)}^{\prime}}
&=& a_I v^I + u^I b_I = \bracket{(u, a)}{(v, b)}.
\label{innerproductn1}
\end{eqnarray}
Here $\bracket{(u, a)}{(v, b)}$ is the inner product on $TM \oplus T^*M$.
Therefore the commutative subspace $(Comm_1(T^*[2]\calM),\{-,-\}_s)$
is equivalent to the subspace of functions
$J_{(i)}$'s with $\bracket{(u, a)}{(v, b)}=0$,
which is nothing but the Dirac structure \cite{
LWX}.

We deform the above Poisson algebra
by introducing a function $\alpha$ 
of degree $2$. We take $\alpha = \frac{1}{2} \pi^{IJ}(x) p_I p_J$, 
which is a pullback of the same function on $\calM$.

We consider a special condition for $\alpha$ as an example.
Let $\calL$ be a Lagrangian submanifold of a QP-manifold $\calN$.
Then a function $\alpha \in C^{\infty}(\calN)$ is called a 
\textsl{canonical function} respect to $\calL$ 
if $e^{\delta_{\alpha}}\Theta|_{\calL}=
\proj_* e^{\delta_{\alpha}}\Theta=0
$ \cite{Ikeda:2013wh}.
This condition is related to a boundary condition of AKSZ sigma models
with boundary and important to analyze current algebras 
of AKSZ sigma models.
It is also considered as a generalized Maurer-Cartan equation for $\alpha$.
In this article, we take $\calL = \calM$ and $\calN = T^*[n+1]\calM$.
In fact, 
\begin{eqnarray}
\proj_* e^{\delta_{\alpha}} \Theta = 
\proj_*(\Theta +
\sbv{\Theta}{\alpha}_b
+ \frac{1}{2} \sbv{\sbv{\Theta}{\alpha}_b}{\alpha}_b
+ \frac{1}{3!} \sbv{\sbv{\sbv{\Theta}{\alpha}_b}{\alpha}_b}{\alpha}_b
+ \cdots).
\end{eqnarray}
The second term is the derived bracket, which gives
the small bracket $\{\alpha,\alpha\}_s$.

\if0
A Poisson function in
\cite{Terashima, Kosmann-Schwarzbach:2007}
is an example of a function $\alpha$ in a QP-pair for $n=1$.
Higher degree examples include twisted Courant algebroids, 
strong Courant algebroids and so on \cite{Ikeda:2013wh}. 
%
\fi

Suppose that 
$\alpha = \frac{1}{2} \pi^{IJ}(x) p_I p_J$ is a canonical function,
$e^{- \delta_{\alpha}} \Theta |_{\calM} =0$.
This condition is equivalent to the following equation:
\begin{eqnarray}
\frac{\partial \pi^{IJ}}{\partial x^L} \pi^{LK}
+ (IJK \ \mbox{cyclic}) = \pi^{IL} \pi^{JM} \pi^{KN} H_{LMN}.
\label{poissonjacobi}
\end{eqnarray}
i.e. the bivector field $\pi = \frac{1}{2} \pi^{ij}(x)
\frac{\partial}{\partial x^i} \wedge \frac{\partial}{\partial x^i}$
satisfies $[\pi, \pi]_S = \wedge^3 \pi^{\sharp} H$,
which a twisted Poisson structure on $M$
\cite{Park2000au, Klimcik:2001vg, Severa:2001qm}.
\if0
The twisted derived bracket
$\bsbv{\bsbv{-}{e^{- \delta \alpha} \Theta}}{-}$,
gives the same equations as \eqref{derivedbrackettwistedPoisson01}--
\eqref{derivedbrackettwistedPoisson03}.
\fi
We obtain different small brackets on currents 
from Equations (\ref{smallsupercommutation}) as follows.

The subspace of commutative functions twisted by $\alpha$ is
$B_1(T^*[2]T^*[1]M, \alpha)
= \{e^{\mathrm{ad} (\alpha)} f| f \in B_1(T^*[2]T^*[1]M, 0)
 \}$.
Since the twisted derived bracket 
$\bsbv{\bsbv{J_1}{e^{- \delta \alpha} \Theta}}{J_2}$
is equivalent to
$\bsbv{\bsbv{e^{\mathrm{ad} (\alpha)} J_1}{\Theta}}{e^{\mathrm{ad} (\alpha)} J_2}$
by a canonical transformation,
we calculate the latter equation.
We take the basis 
$K^I_{(0)} = x^I$ and $K^I_{(1)} = q^I$
for $B_1(T^*[2]T^*[1]M, 0)
$.
Their graded Poisson brackets and derived brackets vanish.
Twisting by $\alpha$,
the basis changes to
\begin{eqnarray}
&&
J^I_{(0)}
= e^{\mathrm{ad} (\alpha)} x^I = x^I,
\nonumber \\
&&
J^I := J^I_{(1)}
= e^{\mathrm{ad} (\alpha)} q^I
= q^I + \pi^{IJ}(x) p_J,
\end{eqnarray}
an obtain the space ${B}_1(T^*[2]T^*[1]M, \alpha)$.
The Poisson bracket of $J^I$'s are zero
$\bsbv{J^I}{J^J}=0$,
but the derived bracket is
\begin{eqnarray}\label{Poissoncommutation}
\bsbv{\bsbv{J^I}{\Theta}}{J^J}
= - \left(
\frac{\partial \pi^{IJ}}{\partial x^K}
+ \pi^{IL} \pi^{JM} H_{LMK}
\right)
J^K,
\end{eqnarray}
using Equation (\ref{poissonjacobi}). 
This derives the small Poisson bracket on the small manifold $\calM$,
\begin{eqnarray}
\ssbv{\proj_* J^I}{\proj_* J^J}
= \proj_* \bsbv{\bsbv{J^I}{\Theta}}{J^J} 
= -\left(
\frac{\partial \pi^{IJ}}{\partial x^K}
+ \pi^{IL} \pi^{JM} H_{LMK}
\right) \proj_* J^K
,
\label{Poissoncommutation}
\end{eqnarray}
where $\proj_* J^K
= \pi^{KL} p_L$.
This gives a Poisson algebra 
associated to twisting by the canonical function $\alpha$.

\if0
Let $C_1(T^*[1]M)
= \{J \in C^{\infty}(T^*[1]M)| |f| \leq 1 \}$.
Next we describe the Poisson algebra
(\ref{Poissoncommutation}) only
by functions
$C_1(T^*[1]M)$ on a small QP manifold $(\calM, \ssbv{-}{-})$.
This setting is a toy model
of the AKSZ sigma model
on $\Map(\mbox{point}, T^*[1]M)$
with the classical action $\alpha$.

Let us consider the subspace of commutative functions \\
$B_1(T^*[1]M, 0) = \{ f(x)  | f \in C_1(T^*[1]M), f|_{T^*[1]}=0 \}$.
This space is twisted by $\alpha$ as
$B_1(T^*[1]M, \alpha) = \{e^{\delta(s)_\alpha} f(x) |
f \in C_1(T^*[1]M) \}$ with respect to
the small Poisson bracket on $\calM$,
where
$e^{\delta(s)_{\alpha}} f(x)
= \ssbv{f(x)}{\alpha}
+ \frac{1}{2} \ssbv{\ssbv{f(x)}{\alpha}}{\alpha} + \cdots$.
%
The basis of $B_1(T^*[1]M, \alpha)$ is
\begin{eqnarray}
j^I = \frac{\delta \alpha}{\delta p_I} =
\ssbv{x^I}{\alpha}
(= \bsbv{\bsbv{x^I}{\Theta}}{\alpha}|_{\calM})
= \pi^{IJ}(x) p_J,
\end{eqnarray}
which is a Noether current for the action $\alpha$ 
as a physical model.
Note that $j^I = J^I|_{T^*[1]M}$.
Using $\ssbv{x^I}{p_J} = \delta^I{}_J$ and
(\ref{poissonjacobi}),
the commutation relation of $j^I$ is obtained as
\begin{eqnarray}
\ssbv{j^I}{j^J}
= -
\left(
\frac{\partial \pi^{IJ}}{\partial x^K}
+ \pi^{IL} \pi^{JM} H_{LMK}
\right) j^K.
\label{smallcommutation}
\end{eqnarray}
The equation (\ref{smallcommutation}) is equivalent to
the equation (\ref{Poissoncommutation}).
\fi

\end{example}

\newcommand{\bibit}{\sl}



\end{document}